\newtheorem{theorem}{Theorem}
\setlist[enumerate]{itemindent=\dimexpr\labelwidth+\labelsep\relax,leftmargin=0pt}
\DeclareMathOperator*{\argmax}{arg\,\max}
\DeclareMathOperator*{\argmin}{arg\,\min}
\def\cdf(#1)(#2)(#3){0.5*(1+(erf((#1-#2)/(#3*sqrt(2)))))}%
\footnotesize\color{gray},
\DeclareMathOperator{\aju}{end}
\DeclareMathOperator{\participants}{part} 
\DeclareMathOperator{\len}{len} 
\DeclareMathOperator{\pref}{pref} 
\DeclareMathOperator{\sched}{sched}
\newcommand{\N}{\mathbb{N}}
\renewcommand{\implies}{\Rightarrow}
\newcommand{\Landau}{\mathcal{O}}
\title{Improving Multi-agent Coordination by Learning to Estimate Contention}
\author{
Panayiotis Danassis\and
Florian Wiedemair\And
Boi Faltings
\affiliations
Artificial Intelligence Laboratory, \'Ecole Polytechnique F\'ed\'erale de Lausanne (EPFL), Switzerland
\emails
\{firstname.lastname\}@epfl.ch
}
\begin{document}

\maketitle
\thispagestyle{fancy}

\begin{abstract}
	We present a multi-agent learning algorithm, ALMA-Learning, for efficient and fair allocations in \emph{large-scale} systems. We circumvent the traditional pitfalls of multi-agent learning (e.g., the moving target problem, the curse of dimensionality, or the need for mutually consistent actions) by relying on the ALMA heuristic as a coordination mechanism for each stage game. ALMA-Learning is decentralized, observes only own action/reward pairs, requires no inter-agent communication, and achieves near-optimal ($<5\%$ loss) and fair coordination in a variety of synthetic scenarios and a real-world meeting scheduling problem. The lightweight nature and fast learning constitute ALMA-Learning ideal for \emph{on-device} deployment.
\end{abstract}

\section{Introduction} \label{Introduction}

One of the most relevant problems in multi-agent systems is finding an optimal allocation between agents, i.e., computing a maximum-weight matching, where edge weights correspond to the utility of each alternative. Many multi-agent coordination problems can be formulated as such. Example applications include role allocation (e.g., team formation \cite{GUNN201322}), task assignment (e.g., smart factories, or taxi-passenger matching \cite{danassis2019putting,varakantham2012decision}), resource allocation (e.g., parking/charging spaces for autonomous vehicles \cite{geng2013new}), etc. What follows is \emph{applicable to any such scenario}, but for concreteness we focus on the \emph{assignment problem} (bipartite matching), one of the most fundamental combinatorial optimization problems \cite{munkres1957algorithms}.

A significant challenge for any algorithm for the assignment problem emerges from the nature of real-world applications, which are often \emph{distributed} and \emph{information-restrictive}. Sharing plans, utilities, or preferences creates high overhead, and there is often a lack of responsiveness and/or communication between the participants \cite{AAAI10-adhoc}. Achieving fast convergence and high efficiency in such information-restrictive settings is extremely challenging.

A recently proposed heuristic (ALMA \cite{ijcai201931}) was specifically designed to address the aforementioned challenges. ALMA is \emph{decentralized}, completely uncoupled (agents are only aware of their \emph{own history}), and requires \emph{no communication} between the agents. Instead, agents make decisions locally, based on the contest for resources that \emph{they} are interested in, and the agents that are interested in the \emph{same} resources. As a result, in the realistic case where each agent is interested in a \emph{subset} (of fixed size) of the total resources, ALMA's convergence time is \emph{constant} in the total problem size. This condition holds by default in many real-world applications (e.g., resource allocation in urban environments), since agents only have a local (partial) knowledge of the world, and there is typically a cost associated with acquiring a resource. This \emph{lightweight} nature of ALMA coupled with the \emph{lack of inter-agent communication}, and the \emph{highly efficient allocations} \cite{danassis2019putting,ijcai201931,danassis2020differential}, make it ideal for an \emph{on-device} solution for large-scale intelligent systems (e.g., IoT devices, smart cities and intelligent infrastructure, industry 4.0, autonomous vehicles, etc.).

Despite ALMA's high performance in a variety of domains, it remains a heuristic; i.e., sub-optimal by nature. In this work, we introduce a learning element (ALMA-Learning) that allows to quickly close the gap in social welfare compared to the optimal solution, while simultaneously increasing the fairness of the allocation. Specifically, in ALMA, while contesting for a resource, each agent will back-off with probability that depends on their \emph{own utility loss} of switching to some alternative. ALMA-Learning improves upon ALMA by allowing agents to learn the chances that they will actually obtain the alternative option they consider when backing-off, which helps guide their search.

ALMA-Learning is applicable in \emph{repeated allocation} games (e.g., self organization of intelligent infrastructure, autonomous mobility systems, etc.), but can be also applied as a \emph{negotiation protocol} in one-shot interactions, where agents can simulate the learning process offline, before making their final decision. A motivating \emph{real-world} application is presented in Section \ref{Evaluation: Test Case 2}, where ALMA-Learning is applied to solve a large-scale meeting scheduling problem.

\subsection{Our Contributions}

\begin{enumerate} [label=\textbf{(\arabic*)}]
	\item We introduce \textbf{ALMA-Learning}, a distributed algorithm for large-scale multi-agent coordination, focusing on \emph{scalability} and \emph{on-device} deployment in real-world applications.
	\item We prove that ALMA-Learning \emph{converges}.
	\item We provide a thorough evaluation in a variety of synthetic benchmarks and a real-world meeting scheduling problem. In all of them ALMA-Learning is able to quickly (as little as $64$ training steps) reach allocations of high social welfare (less than $5\%$ loss) and fairness (up to almost $10\%$ lower inequality compared to the best performing baseline).
\end{enumerate}

\subsection{Discussion and Related Work} \label{Related Work}

Multi-agent coordination can usually be formulated as a matching problem. Finding a maximum weight matching is one of the best-studied combinatorial optimization problems (see \cite{su2015algorithms,lovasz2009matching}). There is a plethora of polynomial time algorithms, with the \emph{Hungarian} algorithm \cite{kuhn1955hungarian} being the most prominent centralized one for the bipartite variant (i.e., the assignment problem). In real-world problems, a centralized coordinator is not always available, and if so, it has to know the utilities of all the participants, which is often not feasible. Decentralized algorithms (e.g., \cite{giordani2010distributed}) solve this problem, yet they require polynomial computational time and polynomial number of messages -- such as cost matrices \cite{7991447}, pricing information \cite{zavlanos2008distributed}, or a basis of the LP \cite{burger2012distributed}, etc. (see also \cite{Kuhn2016LCL29061422742012,Elkin2004DAS10549161054931} for general results in distributed approximability under only local information/computation).

While the problem has been `solved' from an algorithmic perspective -- having both centralized and decentralized polynomial algorithms -- it is not so from the perspective of multi-agent systems, for two key reasons: (1) complexity, and (2) communication. The proliferation of intelligent systems will give rise to \emph{large-scale}, \emph{multi-agent} based technologies. Algorithms for maximum-weight matching, whether centralized or distributed, have runtime that increases with the total problem size, even in the realistic case where agents are interested in a small number of resources. Thus, they can only handle problems of some bounded size. Moreover, they require a significant amount of inter-agent communication. As the number and diversity of autonomous agents continue to rise, differences in origin, communication protocols, or the existence of legacy agents will bring forth the need to collaborate without any form of explicit communication \cite{AAAI10-adhoc}. Most importantly though, communication between participants (sharing utility tables, plans, and preferences) creates high overhead. On the other hand, under reasonable assumptions about the preferences of the agents, ALMA's runtime is \emph{constant} in the total problem size, while requiring no message exchange (i.e., no communication network) between the participating agents. The proposed approach, ALMA-Learning, \emph{preserves} the aforementioned two properties of ALMA.

From the perspective of Multi-Agent Learning (MAL), the problem at hand falls under the paradigm of multi-agent reinforcement learning, where for example it can be modeled as a Multi-Armed Bandit (MAB) problem \cite{auer2002nonstochastic}, or as a Markov Decision Process (MDP) and solved using a variant of Q-Learning \cite{Busoniu2008CSM22204332221106}. In MAB problems an agent is given a number of arms (resources) and at each time-step has to decide which arm to pull to get the maximum expected reward. In Q-learning agents solve Bellman's optimality equation \cite{bellman2013dynamic} using an iterative approximation procedure so as to maximize some notion of expected cumulative reward. Both approaches have arguably been designed to operate in a more challenging setting, thus making them susceptible to many pitfalls inherent in MAL. For example, there is no stationary distribution, in fact, rewards depend on the joint action of the agents and since all agents learn simultaneously, this results in a moving-target problem. Thus, there is an inherent need for coordination in MAL algorithms, stemming from the fact that the effect of an agent's action depends on the actions of the other agents, i.e. actions must be mutually consistent to achieve the desired result. Moreover, the curse of dimensionality makes it difficult to apply such algorithms to large scale problems. ALMA-Learning solves both of the above challenges \emph{by relying on ALMA as a coordination mechanism for each stage of the repeated game}. Another fundamental difference is that the aforementioned algorithms are designed to tackle the exploration/exploitation dilemma. A bandit algorithm for example will constantly explore, even if an agent has acquired his most preferred alternative. In matching problems, though, agents know (or have an estimate of) their own utilities. ALMA-Learning in particular, requires the knowledge of personal preference ordering and pairwise differences of utility (which are far easier to estimate than the exact utility table). The latter gives a great advantage to ALMA-Learning, since agents do not need to continue exploring after successfully claiming a resource, which stabilizes the learning process.

\section{Proposed Approach: ALMA-Learning} \label{Proposed Approach}


\subsection{The Assignment Problem} \label{The Assignment Problem}

The assignment problem refers to finding a maximum weight matching in a weighted bipartite graph, $\mathcal{G} = \left\{ \mathcal{N} \cup \mathcal{R}, \mathcal{V} \right\}$. In the studied scenario, $\mathcal{N} = \{1, \dots, N\}$ agents compete to acquire $\mathcal{R} = \{1, \dots, R\}$ resources. The weight of an edge $(n, r) \in \mathcal{V}$ represents the utility ($u_n(r) \in [0, 1]$) agent $n$ receives by acquiring resource $r$. Each agent can acquire at most one resource, and each resource can be assigned to at most one agent. The goal is to maximize the social welfare (sum of utilities), i.e., $\max_{\mathbf{x} \geq 0} \sum_{(n,r) \in \mathcal{V}} u_n(r) x_{n, r}$, where $\mathbf{x} = (x_{1, 1}, \dots, x_{N, R})$, subject to $\sum_{r | (n,r) \in \mathcal{V}} x_{n, r} = 1, \forall n \in \mathcal{N}$, and $\sum_{n | (n,r) \in \mathcal{V}} x_{n, r} = 1, \forall r \in \mathcal{R}$.

\subsection{Learning Rule} \label{Learning Rule}

We begin by describing (a slightly modified version of) the ALMA heuristic of \cite{ijcai201931}, which is used as a subroutine by ALMA-Learning. The pseudo-codes for ALMA and ALMA-Learning are presented in Algorithms \ref{algo: alma} and \ref{algo: alma-learning}, respectively. Both ALMA and ALMA-Learning are run \emph{independently and in parallel by all the agents} (to improve readability, we have omitted the subscript $_n$).

We make the following two assumptions: First, we assume (possibly noisy) knowledge of personal utilities by each agent. Second, we assume that agents can observe feedback from their environment to inform collisions and detect free resources. It could be achieved by the use of \emph{sensors}, or by a \emph{single bit} (0 / 1) feedback from the resource (note that these messages would be between the requesting agent and the resource, not between the participating agents themselves).

For both ALMA, and ALMA-Learning, each agent sorts his available resources (possibly $\mathcal{R}^n \subseteq \mathcal{R}$) in decreasing utility ($r_0, \dots,$ $r_i, \dots, r_{R^n - 1}$) under his preference ordering $\prec_n$.

\subsubsection{\textbf{ALMA: ALtruistic MAtching Heuristic}} 

ALMA converges to a resource through repeated trials. Let $\mathcal{A} = \{Y, A_{r_1}, \dots, A_{r_{R^n}}\}$ denote the set of actions, where $Y$ refers to yielding, and $A_r$ refers to accessing resource $r$, and let $g$ denote the agent's strategy. As long as an agent has not acquired a resource yet, at every time-step, there are two possible scenarios: If $g = A_r$ (strategy points to resource $r$), then agent $n$ attempts to acquire that resource. If there is a collision, the colliding parties back-off (set $g \leftarrow Y$) with some probability. Otherwise, if $g = Y$, the agent chooses another resource $r$ for monitoring. If the resource is free, he sets $g \leftarrow A_r$.

The back-off probability ($P(\cdot)$) is computed individually and locally based on each agent's expected loss. If more than one agent compete for resource $r_i$ (step 8 of Alg. \ref{algo: alma}), each of them will back-off with probability that depends on their expected utility loss. The expected loss array is computed by ALMA-Learning and provided as input to ALMA. The actual back-off probability can be computed with any monotonically decreasing function on $loss$ (see \cite{ijcai201931}). In this work we use $P(loss) = f(loss)^\beta$, where $\beta$ controls the aggressiveness (willingness to back-off), and 

\small
\begin{equation} \label{Eq: loss}
	f(loss) =
	\begin{cases}
		1 - \epsilon, & \text{ if } loss \leq \epsilon \\
		\epsilon, & \text{ if } 1 - loss \leq \epsilon \\
		1 - loss, & \text{ otherwise}
	\end{cases}
\end{equation}
\normalsize

Agents that do not have good alternatives will be less likely to back-off and vice versa. The ones that do back-off select an alternative resource and examine its availability. The resource selection is performed in sequential order, starting from the most preferred resource (see step 3 of Alg. \ref{algo: alma}).

\begin{table}[t!]
	\centering
	\begin{subfigure}[t]{0.48\linewidth}
		\centering
		\includegraphics[trim={0em 0em 0em 0em}, clip]{./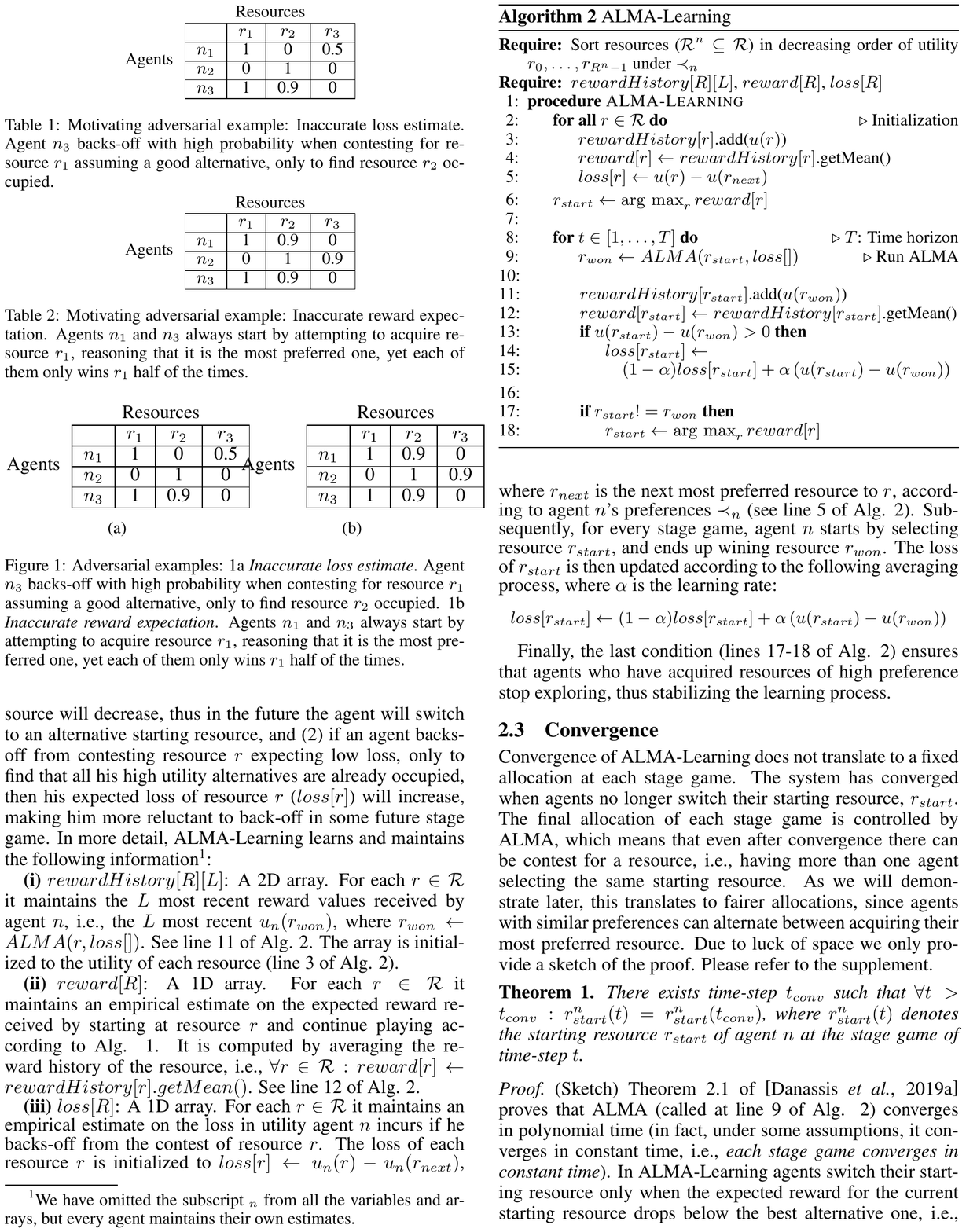}
		\caption{}
		\label{tb: adversarial example 1}
	\end{subfigure}
	~
	\begin{subfigure}[t]{0.48\linewidth}
		\centering
		\includegraphics[trim={0em 0em 0em 0em}, clip]{./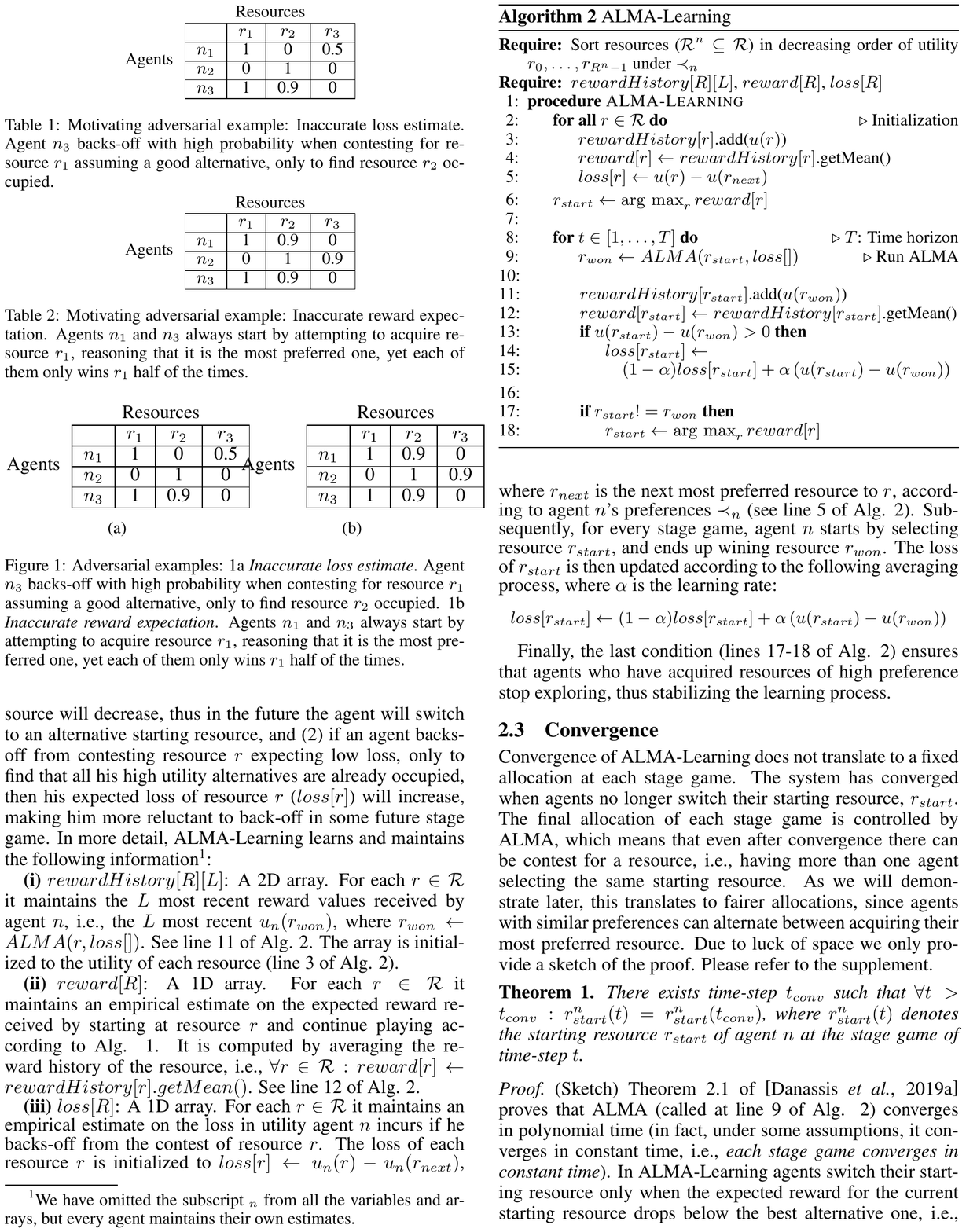}
		\caption{}
		\label{tb: adversarial example 2}
	\end{subfigure}%
	\caption{Adversarial examples: (\ref{tb: adversarial example 1}) \emph{Inaccurate loss estimate}. Agent $n_3$ backs-off with high probability when contesting for resource $r_1$ assuming a good alternative, only to find resource $r_2$ occupied.\\(\ref{tb: adversarial example 2}) \emph{Inaccurate reward expectation}. Agents $n_1$ and $n_3$ always start by attempting to acquire resource $r_1$, reasoning that it is the most preferred one, yet each of them only wins $r_1$ half of the time.}
	\label{tb: adversarial examples SW}
\end{table}

\paragraph{Sources of Inefficiency} ALMA is a heuristic, i.e., sub-optimal by nature. It is worth understanding the sources of inefficiency, which in turn motivated ALMA-Learning. To do so, we provide a couple of adversarial examples.

In the original ALMA algorithm, all agents start attempting to claim their most preferred resource, and back-off with probability that depends on their loss of switching to the immediate next best resource. Specifically, in the simplest case, the probability to back-off when contesting resource $r_i$ would be given by $P(loss(i)) = 1 - loss(i)$, where $loss(i) = u_n(r_i) - u_n(r_{i + 1})$ and $r_{i + 1}$ is the next best resource according to agent $n$'s preferences $\prec_n$.

The first example is given in \autoref{tb: adversarial example 1}. Agent $n_3$ backs-off with high probability (higher than agent $n_1$) when contesting for resource $r_1$ assuming a good alternative, only to find resource $r_2$ occupied. Thus, $n_3$ ends up matched with resource $r_3$. The social welfare of the final allocation is 2, which is $20\%$ worse than the optimal (where agents $n_1, n_2, n_3$ are matched with resources $r_3, r_2, r_1$, respectively, achieving a social welfare of 2.5). ALMA-Learning solves this problem by learning an empirical estimate of the loss an agent will incur if he backs-off from a resource. In this case, agent $n_3$ will learn that his loss is not $1 - 0.9 = 0.1$, but actually $1 - 0 = 1$, and thus will not back-off in subsequent stage games, resulting in an optimal allocation.

In another example (\autoref{tb: adversarial example 2}, agents $n_1$ and $n_3$ always start by attempting to acquire resource $r_1$, reasoning that it is the most preferred one. Yet, in a repeated game, each of them only wins $r_1$ half of the time (for a social welfare 2, which is $28.5\%$ worse than the optimal 2.8), thus, in expectation, resource $r_1$ has utility $0.5$. ALMA-Learning solves this by learning an empirical estimate of the reward of each resource. In this case, after learning, either agent $n_1$ or $n_3$ (or both), will start from resource $r_2$. Agent $n_2$ will back-off since he has a good alternative, and the result will be the optimal allocation where agents $n_1, n_2, n_3$ are matched with resources $r_2, r_3, r_1$ (or $r_1, r_3, r_2$), respectively.

\subsubsection{\textbf{ALMA-Learning}}

ALMA-Learning uses ALMA as a sub-routine, specifically as a coordination mechanism for each stage of the repeated game. Over time, ALMA-Learning learns which resource to select first ($r_{start}$) when running ALMA, and an accurate empirical estimate on the loss the agent will incur by backing-off ($loss[]$). By learning these two values agents take more informed decisions, specifically: (1) If an agent often loses the contest of his starting resource, the expected reward of that resource will decrease, thus in the future the agent will switch to an alternative starting resource, and (2) if an agent backs-off from contesting resource $r$ expecting low loss, only to find that all his high utility alternatives are already occupied, then his expected loss of resource $r$ ($loss[r]$) will increase, making him more reluctant to back-off in some future stage game. In more detail, ALMA-Learning learns and maintains the following information\footnote{We have omitted the subscript $_n$ from all the variables and arrays, but every agent maintains their own estimates.}:

\textbf{(i)} $rewardHistory[R][L]$: A 2D array. For each $r \in \mathcal{R}$ it maintains the $L$ most recent reward values received by agent $n$, i.e., the $L$ most recent $u_n(r_{won})$, where $r_{won} \leftarrow ALMA(r, loss[])$. See line 11 of Alg. \ref{algo: alma-learning}. The array is initialized to the utility of each resource (line 3 of Alg. \ref{algo: alma-learning}).

\textbf{(ii)} $reward[R]$: A 1D array. For each $r \in \mathcal{R}$ it maintains an empirical estimate on the expected reward received by starting at resource $r$ and continue playing according to Alg. \ref{algo: alma}. It is computed by averaging the reward history of the resource, i.e., $\forall r \in \mathcal{R}: reward[r] \leftarrow rewardHistory[r].getMean()$. See line 12 of Alg. \ref{algo: alma-learning}.

\textbf{(iii)} $loss[R]$: A 1D array. For each $r \in \mathcal{R}$ it maintains an empirical estimate on the loss in utility agent $n$ incurs if he backs-off from the contest of resource $r$. The loss of each resource $r$ is initialized to $loss[r] \leftarrow u_n(r) - u_n(r_{next})$, where $r_{next}$ is the next most preferred resource to $r$, according to agent $n$'s preferences $\prec_n$ (see line 5 of Alg. \ref{algo: alma-learning}). Subsequently, for every stage game, agent $n$ starts by selecting resource $r_{start}$, and ends up winning resource $r_{won}$. The loss of $r_{start}$ is then updated according to the following averaging process, where $\alpha$ is the learning rate:
\small
\begin{equation*}
	loss[r_{start}] \leftarrow (1 - \alpha) loss[r_{start}] + \alpha \left( u(r_{start}) - u(r_{won}) \right)
\end{equation*}
\normalsize

Finally, the last condition (lines 17-18 of Alg. \ref{algo: alma-learning}) ensures that agents who have acquired resources of high preference stop exploring, thus stabilizing the learning process.


\begin{algorithm}[!t]
	\small
	\caption{ALMA: Altruistic Matching Heuristic.} \label{algo: alma}
	\begin{algorithmic}[1]
		\Require Sort resources ($\mathcal{R}^n \subseteq \mathcal{R}$) in decreasing order of utility $r_0, \dots, r_{R^n - 1}$ under $\prec_n$

		\Procedure{ALMA}{$r_{start}$, $loss[R]$}
			\State Initialize $g \leftarrow A_{r_{start}}$
			\State Initialize $current \leftarrow -1$
			\State Initialize $converged \leftarrow False$
			\While{!$converged$}
				\If{ $g = A_r$}
					\State Agent $n$ attempts to acquire $r$
					\If{Collision($r$)}
					\State Back-off (set $g \leftarrow Y$) with prob. $P(loss[r])$
					\Else
					\State $converged \leftarrow True$
					\EndIf
				\Else { ($g = Y$)}
					\State $current \leftarrow (current + 1)$ mod $R$
					\State Agent $n$ monitors $r \leftarrow r_{current}$.
					\If{Free($r$)} set $g \leftarrow A_r$
					\EndIf
				\EndIf
			\EndWhile
			\State
			\Return $r$, such that $g = A_r$
		\EndProcedure
	\end{algorithmic}
	\normalsize
\end{algorithm}

\subsection{Convergence} \label{Convergecne}

Convergence of ALMA-Learning does not translate to a fixed allocation at each stage game. The system has converged when agents no longer switch their starting resource, $r_{start}$. The final allocation of each stage game is controlled by ALMA, which means that even after convergence there can be contest for a resource, i.e., having more than one agent selecting the same starting resource. As we will demonstrate later, this translates to fairer allocations, since agents with similar preferences can alternate between acquiring their most preferred resource.

\begin{theorem} \label{theorem}
	There exists time-step $t_{conv}$ such that $\forall t > t_{conv}: r_{start}^n(t) = r_{start}^n(t_{conv})$, where $r_{start}^n(t)$ denotes the starting resource $r_{start}$ of agent $n$ at the stage game of time-step $t$.
\end{theorem}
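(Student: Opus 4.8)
The plan is to show that the profile of starting resources, which lives in the finite product space $\prod_{n \in \mathcal{N}} \mathcal{R}^n$, is modified only finitely many times; any stage after the last modification then serves as $t_{conv}$. Since each agent picks $r_{start}^n = \argmax_r reward[r]$ (breaking ties by $\prec_n$), a change of $r_{start}^n$ forces the estimates $reward[\cdot]$ to re-order. The first structural fact I would record is that, by line~12 of Alg.~\ref{algo: alma-learning}, $reward[r]$ is refreshed only in a stage where agent $n$ actually starts at $r$ (only $rewardHistory[r_{start}]$ changes), and likewise $loss[r]$ is updated only for $r = r_{start}$. Hence the estimate of any resource an agent is currently avoiding is \emph{frozen}, which sharply limits oscillation: to return to an abandoned resource $r$, all currently-better estimates must fall back below the frozen value $reward[r]$.

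Next I would isolate the role of the stopping condition (lines~17--18). Once an agent secures a resource of sufficiently high preference it permanently fixes $r_{start}^n$; therefore the set $S_t \subseteq \mathcal{N}$ of \emph{settled} agents is nondecreasing in $t$ and bounded by $|\mathcal{N}| = N$, so it stabilizes to some $S_\infty$. The theorem then reduces to proving that every agent eventually settles, i.e., that no nonempty set of agents keeps switching $r_{start}$ forever.

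To close this, I would argue by a greedy induction on a priority ordering of (agent, resource) pairs, coupled with the fact that a settled agent holds a fixed resource forever. Once $S_t$ reaches its final value $S_\infty$, the environment faced by the remaining active agents is stationary (the resources held by $S_\infty$ are fixed), so among the active agents the one able to secure the best still-available resource wins it with probability bounded away from zero in each stage; as the within-stage ALMA run converges, that agent almost surely wins eventually and settles, contradicting $S_t = S_\infty$ unless the active set is empty. Throughout I would combine the freezing fact above with a bounded global potential --- for instance the realized social welfare, or a lexicographic comparison of the sorted multiset $\{reward[r_{start}^n]\}_n$ --- to certify that each non-settling switch makes strict progress, so switches cannot recur indefinitely.

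The main obstacle I expect is the \emph{coupling combined with stochasticity}: a single switch by one active agent perturbs the collision pattern and, through ALMA's loss-based back-off, the winning distribution of all other agents, so the ``stationary environment'' only truly holds after the active set has already stabilized. Ruling out a cycle in which agents' switches perpetually re-trigger one another is therefore the crux, and the monotone frozen estimates together with the bounded potential are exactly the tools I would marshal against it. A secondary technical point is that, because the back-off is randomized, the natural conclusion is an almost-sure statement, so I would phrase ``eventually wins'' and ``eventually settles'' as holding with probability one.
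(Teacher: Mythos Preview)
Your proposal takes a different route from the paper and contains a genuine gap. The paper's argument is a direct potential argument on the reward estimates: it uses exactly your ``frozen estimates'' observation (only $reward[r_{start}]$ is updated, all other entries stay put) to conclude that whenever agent $n$ switches from $r_{start}$ to $r_{start}'$, the value $reward[r_{start}']$ it switches \emph{to} is a frozen number that was already $\le reward[r_{start}]$ at the moment $r_{start}$ was last adopted. Together with a minimum utility gap $\delta>0$, this forces the threshold at which switches occur to strictly decrease, so each agent can switch only finitely many times; the degenerate tail where all estimates collapse to $0$ is then dispatched by a balls-into-bins argument. No notion of ``settled agents'' or inter-agent induction is used.

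Your detour through a monotone set $S_t$ of settled agents is where the argument breaks. Lines~17--18 do \emph{not} permanently fix $r_{start}^n$: they merely skip the $\argmax$ re-evaluation for one round when $r_{start}=r_{won}$. If at a later stage another agent switches its own starting resource to the same slot, agent $n$ can lose, triggering $r_{start}\neq r_{won}$ and a fresh $\argmax$. Hence an agent can leave your ``settled'' set, $S_t$ is not nondecreasing, and the claimed stationary environment for the residual active agents need not materialize. You correctly identify this coupling as the crux, but the tools you list (greedy induction on a priority order, a social-welfare potential) do not by themselves rule out such re-triggering cycles; the frozen-estimate monotonicity does, and the cleaner fix is to apply it directly per agent, as the paper does, rather than routing through $S_t$.
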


\begin{proof} (Sketch; see Appendix \ref{Appendix: Convergecne})
	Theorem 2.1 of \cite{ijcai201931} proves that ALMA (called at line 9 of Alg. \ref{algo: alma-learning}) converges in polynomial time (in fact, under some assumptions, it converges in constant time, i.e., \emph{each stage game converges in constant time}). In ALMA-Learning agents switch their starting resource only when the expected reward for the current starting resource drops below the best alternative one, i.e., for an agent to switch from $r_{start}$ to $r_{start}'$, it has to be that $reward[r_{start}] < reward[r_{start}']$. Given that utilities are bounded in $[0, 1]$, there is a maximum, finite number of switches until $reward_n[r] = 0, \forall r \in \mathcal{R}, \forall n \in \mathcal{N}$. In that case, the problem is equivalent to having $N$ balls thrown randomly and independently into $N$ bins (since $R = N$). Since both $R, N$ are finite, the process will result in a distinct allocation in finite steps with probability 1.
\end{proof}

\begin{algorithm}[!t]
	\small
	\caption{ALMA-Learning} \label{algo: alma-learning}
	\begin{algorithmic}[1]
		\Require Sort resources ($\mathcal{R}^n \subseteq \mathcal{R}$) in decreasing order of utility $r_0, \dots, r_{R^n - 1}$ under $\prec_n$
		\Require $rewardHistory[R][L]$, $reward[R]$, $loss[R]$

		\Procedure{ALMA-Learning}{}
			\ForAll{$r \in \mathcal{R}$} \Comment{Initialization}
				\State $rewardHistory[r]$.add($u(r)$)
				\State $reward[r] \leftarrow rewardHistory[r]$.getMean()
				\State $loss[r] \leftarrow u(r) - u(r_{next})$
			\EndFor
			\State $r_{start} \leftarrow \argmax_r reward[r]$
			\State
			\For{$t \in [1, \dots, T]$} \Comment $T$: Time horizon
				\State $r_{won} \leftarrow ALMA(r_{start}, loss[])$ \Comment{Run ALMA}
				\State
				\State $rewardHistory[r_{start}]$.add($u(r_{won})$) 
				\State $reward[r_{start}] \leftarrow rewardHistory[r_{start}]$.getMean()
				\If{$u(r_{start}) - u(r_{won}) > 0$}
				\State $loss[r_{start}] \leftarrow$
				\State $\quad (1 - \alpha) loss[r_{start}] + \alpha \left( u(r_{start}) - u(r_{won}) \right)$
				\EndIf
				\State
				\If{$r_{start} \neq r_{won}$}
					\State $r_{start} \leftarrow \argmax_r reward[r]$
				\EndIf
			\EndFor
		\EndProcedure
	\end{algorithmic}
	\normalsize
\end{algorithm}

\section{Evaluation} \label{Evaluation}

We evaluate ALMA-Learning in a variety of synthetic benchmarks and a meeting scheduling problem based on \emph{real} data from \cite{romano01}. Error bars representing one standard deviation (SD) of uncertainty.

For brevity and to improve readability, we only present the most relevant results in the main text. We refer the interested reader to the appendix for additional results for both Sections \ref{Evaluation: Test Case 1}, \ref{Evaluation: Test Case 2}, implementation details and hyper-parameters, and a detailed model of the meeting scheduling problem.

\paragraph{Fairness}

The usual predicament of efficient allocations is that they assign the resources only to a fixed subset of agents, which leads to an unfair result. Consider the simple example of \autoref{tb: fairness example}. Both ALMA (with higher probability) and any optimal allocation algorithm will assign the coveted resource $r_1$ to agent $n_1$, while $n_3$ will receive utility 0. But, using ALMA-Learning, agents $n_1$ and $n_3$ will update their expected loss for resource $r_1$ to 1, and randomly acquire it between stage games, increasing fairness. Recall that convergence for ALMA-Learning does not translate to a fixed allocation at each stage game. To capture the fairness of this `mixed' allocation, we report the average fairness on 32 evaluation time-steps that follow the training period.

To measure fairness, we used the \emph{Gini coefficient} \cite{gini1912variabilita}. An allocation $\mathbf{x} = (x_1, \dots, x_N) ^\top$ is fair iff $\mathds{G}(\mathbf{x}) = 0$, where:
$\mathds{G}(\mathbf{x}) = (\sum^{N}_{n = 1} \sum^{N}_{n' = 1} \left| x_n - x_{n'} \right|)     /      (2 N \sum^{N}_{n = 1}  x_n)$




\begin{table}[t!]
	\small
	\centering
	\begin{tabular}{ccccc}
	                                             & \multicolumn{4}{c}{Resources}                                                                                                       \\ \cline{2-5} 
	\multicolumn{1}{c|}{\multirow{4}{*}{Agents}} & \multicolumn{1}{c|}{}      & \multicolumn{1}{c|}{$r_1$} & \multicolumn{1}{c|}{$r_2$} & \multicolumn{1}{c|}{$r_3$}                   \\ \cline{2-5} 
	\multicolumn{1}{c|}{}                        & \multicolumn{1}{c|}{$n_1$} & \multicolumn{1}{c|}{1}     & \multicolumn{1}{c|}{0.5}   & \multicolumn{1}{c|}{0}                      \\ \cline{2-5} 
	\multicolumn{1}{c|}{}                        & \multicolumn{1}{c|}{$n_2$} & \multicolumn{1}{c|}{0}     & \multicolumn{1}{c|}{1}     & \multicolumn{1}{c|}{0}                       \\ \cline{2-5} 
	\multicolumn{1}{c|}{}                        & \multicolumn{1}{c|}{$n_3$} & \multicolumn{1}{c|}{1}     & \multicolumn{1}{c|}{0.75}   & \multicolumn{1}{c|}{$\epsilon \rightarrow 0$}\\ \cline{2-5} 
	\end{tabular}
	\normalsize
	\caption{Adversarial example: Unfair allocation. Both ALMA (with higher probability) and any optimal allocation algorithm will assign the coveted resource $r_1$ to agent $n_1$, while $n_3$ will receive utility 0.}
	\label{tb: fairness example}
\end{table}

\subsection{Test Case \#1: Synthetic Benchmarks} \label{Evaluation: Test Case 1}

\paragraph{Setting}

We present results on three benchmarks:
\begin{enumerate} [label=\textbf{(\alph*)}]
	\item \emph{Map}: Consider a Cartesian map on which the agents and resources are randomly distributed. The utility of agent $n$ for acquiring resource $r$ is proportional to the inverse of their distance, i.e., $u_n(r) = 1 / d_{n,r}$. Let $d_{n,r}$ denote the Manhattan distance. We assume a grid length of size $\sqrt{4 \times N}$.
	\item \emph{Noisy Common Utilities}: This pertains to an anti-coordination scenario, i.e., competition between agents with similar preferences. We model the utilities as: $\forall n, n' \in \mathcal{N}, |u_n(r) - u_{n'}(r)| \leq \text{noise}$, where the noise is sampled from a zero-mean Gaussian distribution, i.e., $\text{noise} \sim \mathcal{N}(0, \sigma^2)$.
	\item \emph{Binary Utilities}: This corresponds to each agent being indifferent to acquiring any resource amongst his set of desired resources, i.e., $u_n(r)$ is randomly assigned to 0 or 1.
\end{enumerate}

\paragraph{Baselines}

We compare against: (i) the \emph{Hungarian algorithm} \cite{kuhn1955hungarian}, which computes a maximum-weight matching in a bipartite graphs, (ii) \emph{ALMA} \cite{ijcai201931}, and (iii) the \emph{Greedy} algorithm, which goes through the agents randomly, and assigns them their most preferred, unassigned resource.



\paragraph{Results} \label{Results: Test-Case 1}

We begin with the loss in social welfare. Figure \ref{fig: map_tt_512_RDSW} depicts the results for the Map test-case, while Table \ref{tb: social welfare} aggregates all three test-cases\footnote{For the Noisy Common Utilities test-case, we report results for $\sigma = 0.1$; which is the worst performing scenario for ALMA-Learning. Similar results were obtained for $\sigma = 0.2$ and $\sigma = 0.4$.}. ALMA-Leaning reaches \emph{near-optimal} allocations (\emph{less than $2.5\%$ loss}), in most cases in just $32-512$ training time-steps. The exception is the Noisy Common Utilities test-case, where the training time was slightly higher. Intuitively we believe that this is because ALMA already starts with a near optimal allocation (especially for $R > 256$), and given the high similarity on the agent's utility tables (especially for $\sigma = 0.1$), it requires a lot of fine-tuning to improve the result.


Moving on to fairness, ALMA-Leaning achieves the \emph{most fair allocations in all of the test-cases}. As an example, Figure \ref{fig: map_tt_512_GI} depicts the Gini coefficient for the Map test-case. ALMA-Learning's Gini coefficient is $-18\%$ to $-90\%$ lower on average (across problem sizes) than ALMA's, $-24\%$ to $-93\%$ lower than Greedy's, and $-0.2\%$ to $-7\%$ lower than Hungarian's.

\begin{figure}[t!]
	\centering
	\begin{subfigure}[t]{0.48\linewidth}
		\centering
		\includegraphics[width = 1 \linewidth, trim={0em 0em 0em 0em}, clip]{./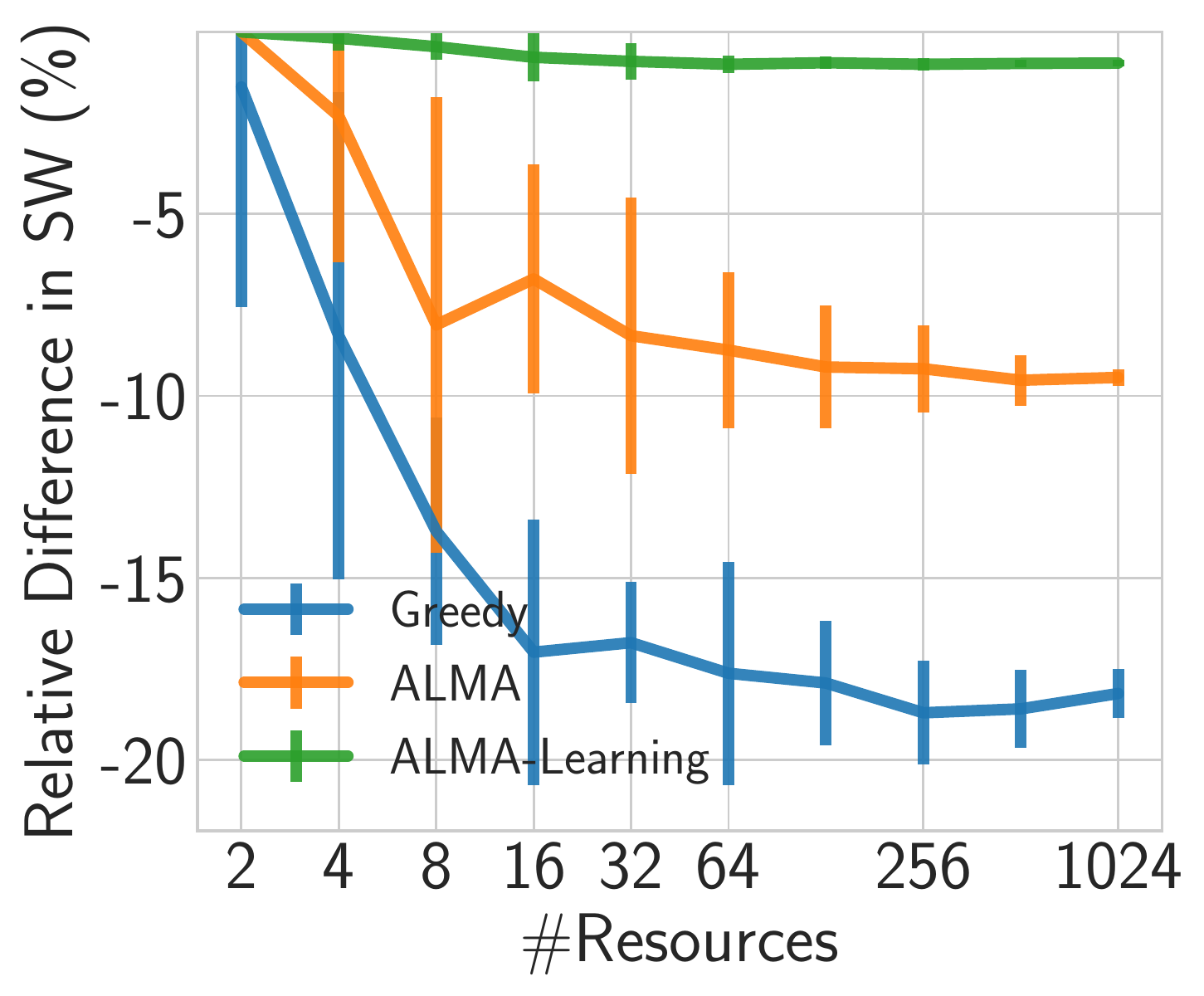}
		\caption{Relative Difference in SW}
		\label{fig: map_tt_512_RDSW}
	\end{subfigure}
	~
	\begin{subfigure}[t]{0.48\linewidth}
		\centering
		\includegraphics[width = 1 \linewidth, trim={0em 0em 0em 0em}, clip]{./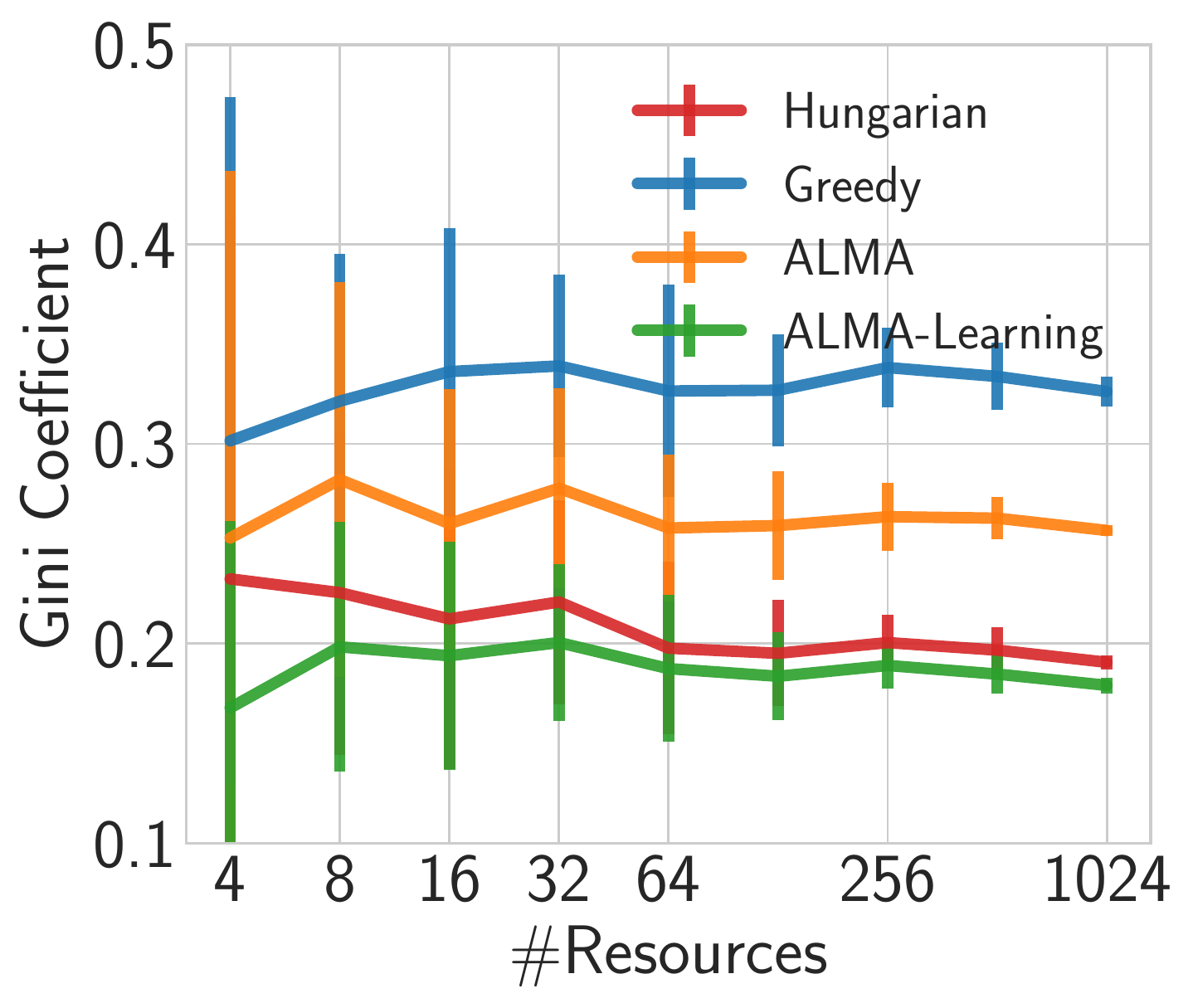}
		\caption{Gini Index (lower is better)}
		\label{fig: map_tt_512_GI}
	\end{subfigure}
	\caption{Map test-case. Results for increasing number of resources ($[2, 1024]$, $x$-axis in log scale), and $N = R$. ALMA-Learning was trained for 512 time-steps.}
	\label{fig: results test-case 1}
\end{figure}

\begin{table}[t!]
	\centering
	\caption{Range of the average loss ($\%$) in social welfare compared the (centralized) optimal for the three different benchmarks.}
	\label{tb: social welfare}
	\resizebox{\linewidth}{!}{%
	\begin{tabular}{@{}rccc@{}}
	\toprule
	\multicolumn{1}{c}{} & Greedy             & ALMA               & \textbf{ALMA-Learning} \\ \midrule
	(a) Map              & $1.51\% - 18.71\%$ & $0.00\% - 9.57\%$  & $0.00\% - 0.89\%$ \\
	(b) Noisy            & $8.13\% - 12.86\%$ & $2.96\% - 10.58\%$ & $1.34\% - 2.26\%$ \\
	(c) Binary           & $0.10\% - 14.70\%$ & $0.00\% - 16.88\%$ & $0.00\% - 0.39\%$ \\ \bottomrule
	\end{tabular}%
	}
\end{table}

\subsection{Test Case \#2: Meeting Scheduling} \label{Evaluation: Test Case 2}

\begin{figure*}[t!]
	\centering
	\begin{subfigure}[t]{0.26\textwidth}
		\centering
		\includegraphics[width = 1 \linewidth, trim={0em 0em 18em 0em}, clip]{./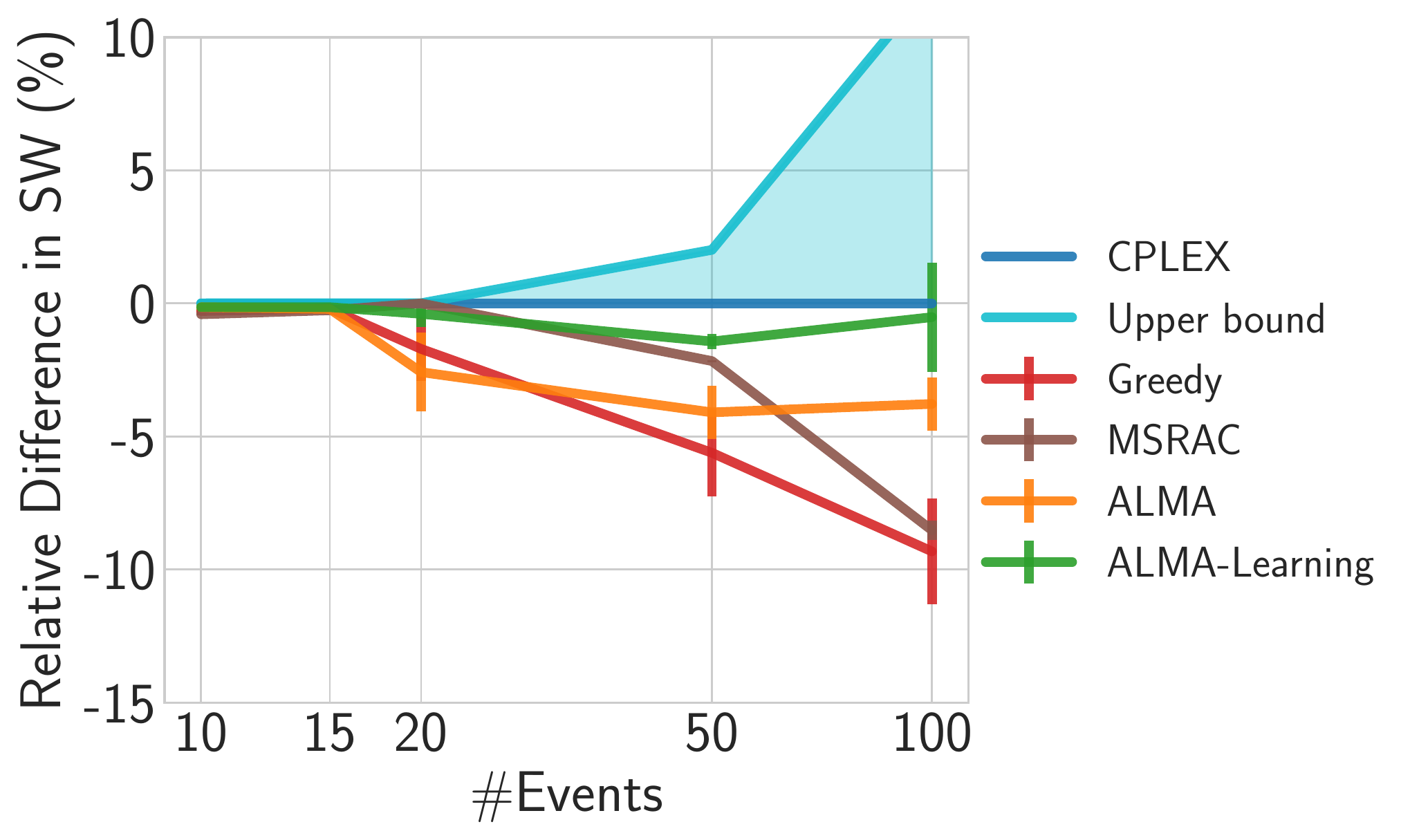}
		\caption{Relative Difference in SW}
		\label{fig: meeting_scheduling_sw}
	\end{subfigure}
	~ 
	\begin{subfigure}[t]{0.26\textwidth}
		\centering
		\includegraphics[width = 1 \linewidth, trim={0em 0em 18.1em 0em}, clip]{./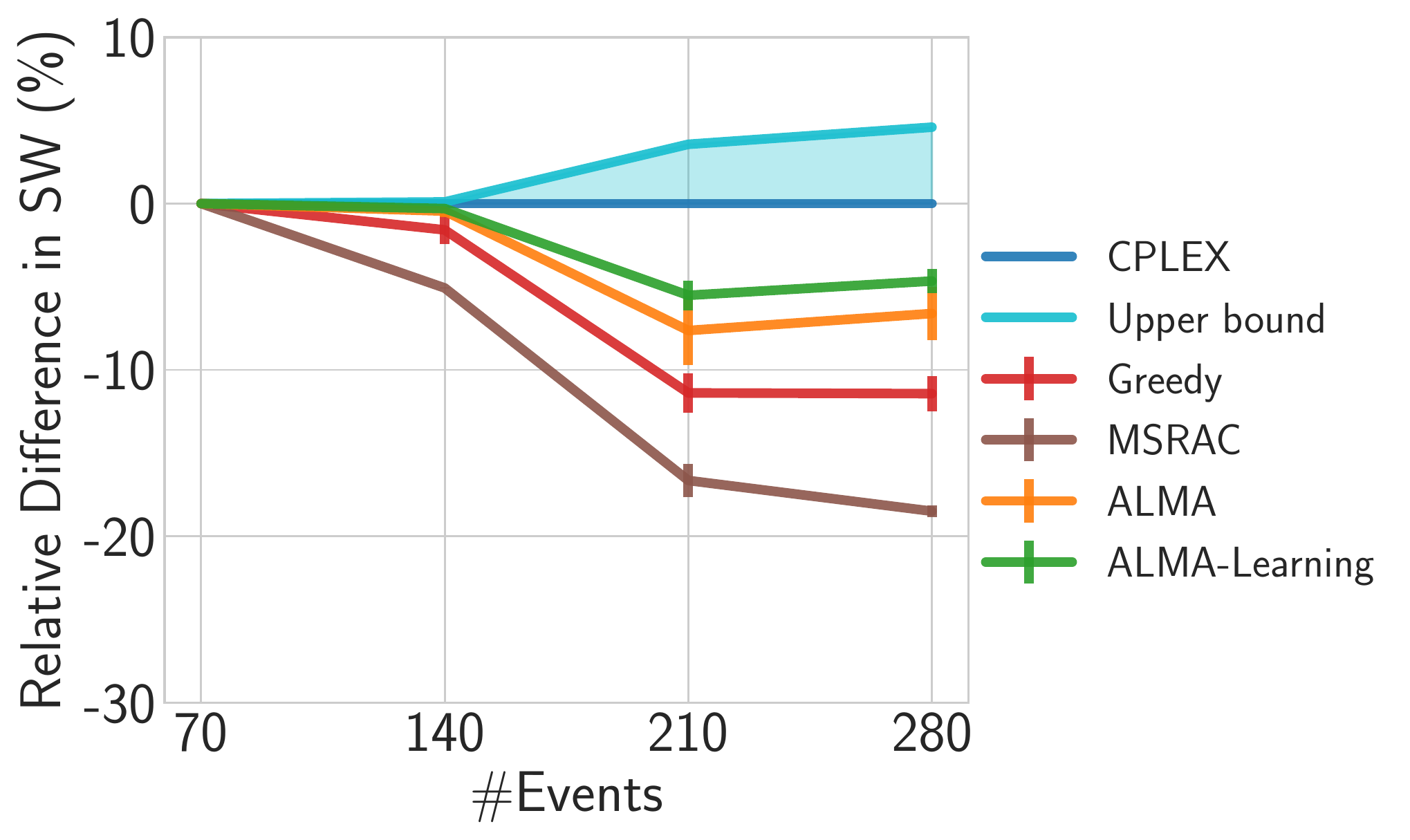}
		\caption{Relative Difference in SW}
		\label{fig: meeting_scheduling_sw_large}
	\end{subfigure}
	~ 
	\begin{subfigure}[t]{0.26\textwidth}
		\centering
		\includegraphics[width = 1 \linewidth, trim={0em 0em 18em 0em}, clip]{./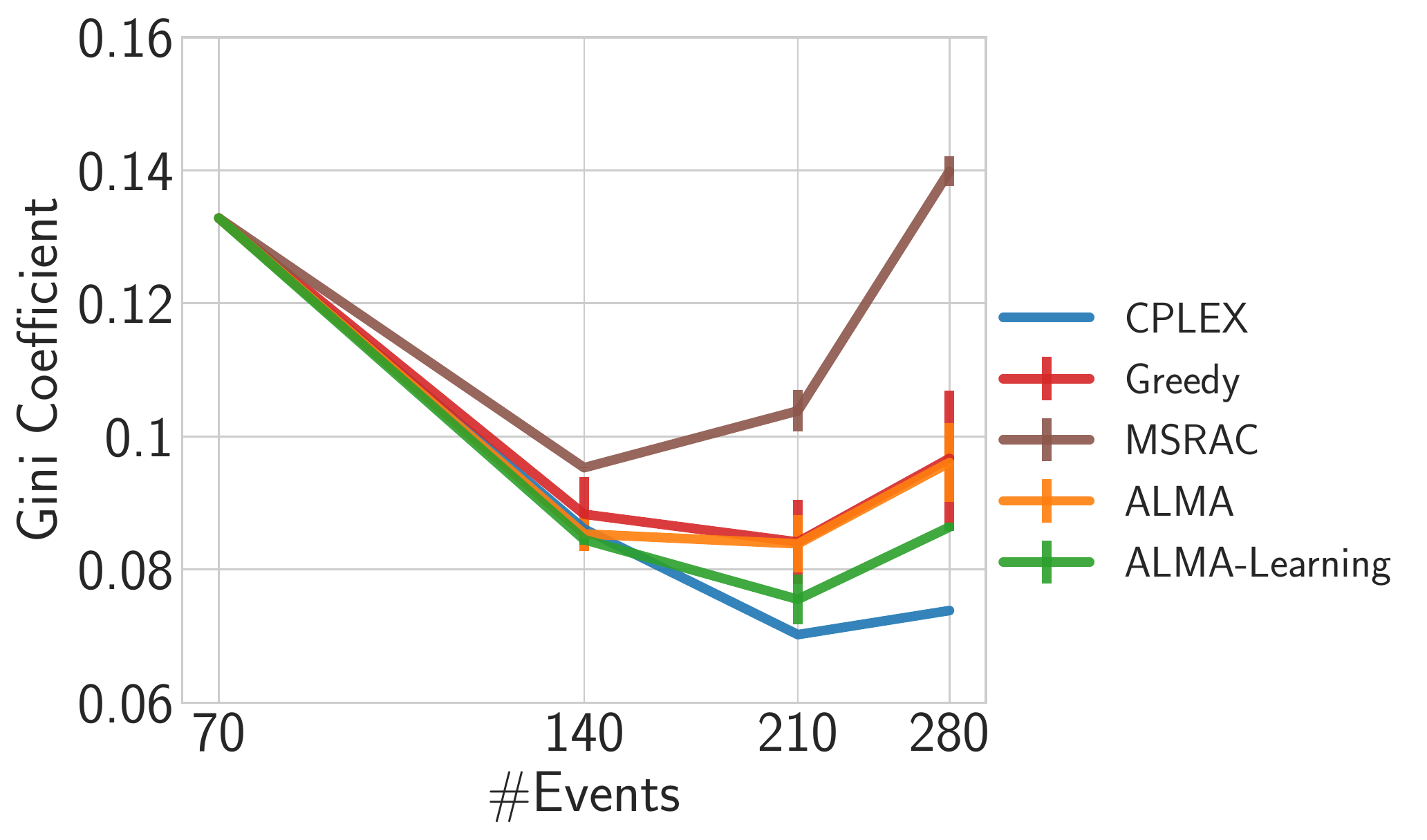}
		\caption{Gini Coefficient (lower is better)}
		\label{fig: meeting_scheduling_gi}
	\end{subfigure}
	~ 
	\begin{subfigure}[t]{0.15\textwidth}
		\centering
		\includegraphics[width = 1 \linewidth, scale=0.8, trim={41em 10.5em 0em 10em}, clip]{./general_100_GDSW.pdf}
		\label{fig: legend}
	\end{subfigure}%
	\caption{Meeting Scheduling. Results for 100 participants ($\mathcal{P}$) and increasing number of events ($x$-axis in log scale). ALMA-Learning was trained for 512 time-steps.}
	\label{fig: meeting_scheduling_results}
\end{figure*}

\paragraph{Motivation}

The problem of scheduling a large number of meetings between multiple participants is ubiquitous in everyday life \cite{nigam20,ottens12,zunino09,maheswaran04,benhassine07,10.5555/1018411.1018882,crawford2005learning,franzin02}. The advent of social media brought forth the need to schedule large-scale events, while the era of globalization and the shift to working-from-home require business meetings to account for participants with diverse preferences (e.g., different timezones).

Meeting scheduling is an inherently decentralized problem. Traditional approaches (e.g., distributed constraint optimization \cite{ottens12,maheswaran04}) can only handle a bounded, small number of meetings. Interdependences between meetings' participants can drastically increase the complexity. While there are many commercially available electronic calendars (e.g., Doodle, Google calendar, Microsoft Outlook, Apple's Calendar, etc.), none of these products is capable of autonomously scheduling meetings, taking into consideration user preferences and availability.

While the problem is inherently online, meetings can be aggregated and scheduled in batches, similarly to the approach for tackling matchings in ridesharing platforms \cite{danassis2019putting}. In this test-case, we map meeting scheduling to an allocation problem and solve it using ALMA and ALMA-Learning. This showcases an application were ALMA-Learning can be used as a negotiation protocol.

\paragraph{Modeling}

Let $\mathcal{E} = \{E_1,\dots,E_n\}$ denote the set of events and $\mathcal{P} = \{P_1,\dots,P_m\}$ the set of participants. To formulate the participation, let $\participants: \mathcal{E} \to 2^\mathcal{P}$, where $2^\mathcal{P}$ denotes the power set of $\mathcal{P}$. We further define the variables $days$ and $slots$ to denote the number of days and time slots per day of our calendar (e.g., $days = 7, slots = 24$). In order to add length to each event, we define an additional function $\len : \mathcal{E} \to \N$. Participants' utilities are given by:

\noindent
$\pref : \mathcal{E} \times \participants(\mathcal{E}) \times \{1,\dots, days\} \times \{1,\dots, slots\} \to [0,1]$.


Mapping the above to the assignment problem of Section \ref{The Assignment Problem}, we would have the set of ($day$, $slot$) tuples to correspond to $\mathcal{R}$, while each event is represented by one event agent (that aggregates the participant preferences), the set of which would correspond to $\mathcal{N}$.

\paragraph{Baselines}

We compare against four baselines: (a) We used the IBM ILOG CP optimizer \cite{laborie2018ibm} to formulate and solve the problem as a CSP\footnote{Computation time limit $20$ minutes.}. An additional benefit of this solver is that it provides an upper bound for the optimal solution (which is infeasible to compute). (b) A modified version of the MSRAC algorithm \cite{benhassine07}, and finally, (c) the greedy and (d) ALMA, as before.

\paragraph{Designing Large Test-Cases}

As the problem size grows, CPLEX's estimate on the upper bound of the optimal solution becomes too loose (see Figure \ref{fig: meeting_scheduling_sw}). To get a more accurate estimate on the loss in social welfare for larger test-cases, we designed a large-instance by combining smaller problem instances, making it easier for CPLEX to solve which in turn allowed for tighter upper bounds as well (see Figure \ref{fig: meeting_scheduling_sw_large}).

We begin by solving two smaller problem instances with a low number of events. We then combine the two in a calendar of twice the length by duplicating the preferences, resulting in an instance of twice the number of events (agents) and calendar slots (resources). Specifically, in this case we generated seven one-day long sub-instances (with 10, 20, 30 and 40 events each), and combined then into a one-week long instance with 70, 140, 210 and 280 events, respectively. The fact that preferences repeat periodically, corresponds to participants being indifferent on the day (yet still have a preference on time).

These instances are depicted in Figure \ref{fig: meeting_scheduling_sw_large} and in the last line of Table \ref{tb: meeting scheduling social welfare}.

\paragraph{Results}

Figures \ref{fig: meeting_scheduling_sw} and \ref{fig: meeting_scheduling_sw_large} depict the relative difference in social welfare compared to CPLEX for 100 participants ($|\mathcal{P}| = 100$) and increasing number of events for the regular ($|\mathcal{E}| \in [10, 100]$) and larger test-cases ($|\mathcal{E}|$ up to 280), respectively. Table \ref{tb: meeting scheduling social welfare} aggregates the results for various values of $\mathcal{P}$. ALMA-Learning is able to achieve less than $5\%$ loss compared to CPLEX, and this difference diminishes as the problem instance increases (less than $1.5\%$ loss for $|\mathcal{P}| = 100$). Finally, for the largest hand-crafted instance ($|\mathcal{P}| = 100, |\mathcal{E}| = 280$, last line of Table \ref{tb: meeting scheduling social welfare} and Figure \ref{fig: meeting_scheduling_sw_large}), ALMA-Learning loses less than $9\%$ compared to the \emph{possible upper bound} of the optimal solution.

Moving on to fairness, Figure \ref{fig: meeting_scheduling_gi} depicts the Gini coefficient for the large, hand-crafted instances ($|\mathcal{P}| = 100, |\mathcal{E}|$ up to 280). ALMA-Learning exhibits low inequality, up to $-9.5\%$ lower than ALMA in certain cases. It is worth noting, though, that the fairness improvement is not as pronounced as in Section \ref{Evaluation: Test Case 1}. In the meeting scheduling problem, all of the employed algorithms exhibit high fairness, due to the nature of the problem. Every participant has multiple meetings to schedule (contrary to only being matched to a single resource), all of which are drawn from the same distribution. Thus, as you increase the number of meetings to be scheduled, the fairness naturally improves.

\begin{table}[t!]
\centering
\caption{Range of the average loss ($\%$) in social welfare compared to the IBM ILOG CP optimizer for increasing number of participants, $\mathcal{P}$ ($|\mathcal{E}| \in [10, 100]$). The final line corresponds to the loss compared to the upper bound for the optimal solution for the large test-case with $|\mathcal{P}| = 100, |\mathcal{E}| = 280$ (Figure \ref{fig: meeting_scheduling_sw_large}).}
\label{tb: meeting scheduling social welfare}
\resizebox{\linewidth}{!}{%
\begin{tabular}{@{}rcccc@{}}
\toprule
\multicolumn{1}{c}{}  & Greedy               & MSRAC                & ALMA                & \textbf{ALMA-Learning} \\ \midrule
$|\mathcal{P}| = 20$  & $6.16 \% - 18.35 \%$ & $0.00 \% - 8.12 \%$  & $0.59 \% - 8.69 \%$ & $0.16 \% - 4.84 \%$ \\
$|\mathcal{P}| = 30$  & $1.72 \% - 14.92 \%$ & $1.47 \% - 10.81 \%$ & $0.50 \% - 8.40 \%$ & $0.47 \% - 1.94 \%$ \\
$|\mathcal{P}| = 50$  & $3.29 \% - 12.52 \%$ & $0.00 \% - 15.74 \%$ & $0.07 \% - 7.34 \%$ & $0.05 \% - 1.68 \%$ \\ 
$|\mathcal{P}| = 100$ & $0.19 \% - 9.32 \%$  & $0.00 \% - 8.52 \%$  & $0.15 \% - 4.10 \%$ & $0.14 \% - 1.43 \%$ \\ \midrule
$|\mathcal{E}| = 280$ & $0.00 \% - 15.31 \%$ & $0.00 \% - 22.07 \%$ & $0.00 \% - 10.81 \%$ & $0.00 \% - 8.84 \%$\\ \bottomrule
\end{tabular}%
}
\end{table}

\section{Conclusion} \label{Conclusion}

The next technological revolution will be interwoven to the proliferation of intelligent systems. To truly allow for scalable solutions, we need to shift from traditional approaches to multi-agent solutions, ideally run \emph{on-device}. In this paper, we present a novel learning algorithm (ALMA-Learning), which exhibits such properties, to tackle a central challenge in multi-agent systems: finding an optimal allocation between agents, i.e., computing a maximum-weight matching. We prove that ALMA-Learning converges, and provide a thorough empirical evaluation in a variety of synthetic scenarios and a real-world meeting scheduling problem. ALMA-Learning is able to quickly (in as little as 64 training steps) reach allocations of high social welfare (less than $5\%$ loss) and fairness.


\clearpage
\pagestyle{empty}
\appendix

\section*{Appendix: Contents}

In this appendix we include several details that have been omitted from the main text for the shake of brevity and to improve readability. In particular:

\begin{itemize}
	\item[-] In Section \ref{Appendix: Convergecne}, we prove Theorem \ref{theorem}.
	\item[-] In Section \ref{Appendix: Modeling of the Meeting Scheduling Problem}, we describe in detail the modeling of the meeting scheduling problem, including the problem formulation, the data generation, the modeling of the events, the participants, and the utility functions, and finally several implementation related details.
	\item[-] In Section \ref{Appendix: Evaluation}, we provide a thorough account of the simulation results -- including but not limited to omitted graphs and tables -- both for the synthetic benchmarks and the meeting scheduling problem.
\end{itemize}

\section{Proof of Theorem \ref{theorem}} \label{Appendix: Convergecne}

\begin{proof}
	Theorem 2.1 of \cite{ijcai201931} proves that ALMA (called at line 9 of Algorithm \ref{algo: alma-learning}) converges in polynomial time.

	In fact, under the assumption that each agent is interested in a subset of the total resources (i.e., $\mathcal{R}^n \subset \mathcal{R}$) and thus at each resource there is a bounded number of competing agents ($\mathcal{N}^r \subset \mathcal{N}$) Corollary 2.1.1 of \cite{ijcai201931} proves that the expected number of steps any individual agent requires to converge is independent of the total problem size (i.e., $N$ and $R$). In other words, by bounding these two quantities (i.e., we consider $R^n$ and $N^r$ to be constant functions of $N$, $R$), the convergence time of ALMA is \emph{constant} in the total problem size $N$, $R$. Thus, under the aforementioned assumptions:

	\begin{center}
		\emph{Each stage game converges in constant time.}
	\end{center}

	\noindent
	Now that we have established that the call to the ALMA procedure will return, the key observation to prove convergence for ALMA-Learning is that agents switch their starting resource only when the expected reward for the current starting resource drops below the best alternative one, i.e., for an agent to switch from $r_{start}$ to $r_{start}'$, it has to be that $reward[r_{start}] < reward[r_{start}']$. Given that utilities are bounded in $[0, 1]$, there is a maximum, finite number of switches until $reward_n[r] = 0, \forall r \in \mathcal{R}, \forall n \in \mathcal{N}$. In that case, the problem is equivalent to having $N$ balls thrown randomly and independently into $N$ bins (since $R = N$). Since both $R, N$ are finite, the process will result in a distinct allocation in finite steps with probability 1. In more detail, we can make the following arguments:

	(i) Let $r_{start}$ be the starting resource for agent $n$, and $r_{start}' \leftarrow \argmax_{r \in \mathcal{R} / \{r_{start}\}} reward[r]$. There are two possibilities. Either $reward[r_{start}] > reward[r_{start}']$ for all time-steps $t > t_{converged}$ -- i.e., $reward[r_{start}]$ can oscillate but always stays larger than $reward[r_{start}']$ -- or there exists time-step $t$ when $reward[r_{start}] < reward[r_{start}']$, and then agent $n$ switches to the starting resource $r_{start}'$.

	(ii) Only the reward of the starting resource $r_{start}$ changes at each stage game. Thus, for the reward of a resource to increase, it has to be the $r_{start}$. In other words, at each stage game that we select $r_{start}$ as the starting resource, the reward of every other resource remains (1) unchanged and (2) $reward[r] < reward[r_{start}], \forall r \in \mathcal{R} \ \{ r_{start} \}$ (except when an agent switches starting resources).

	(iii) There is a finite number of times each agent can switch his starting resource $r_{start}$. This is because $u_n(r) \in [0, 1]$ and $|u_n(r) - u_n(r')| > \delta, \forall n \in \mathcal{N}, r \in \mathcal{R}$, where $\delta$ is a small, strictly positive minimum increment value. This means that either the agents will perform the maximum number of switches until $reward_n[r] = 0, \forall r \in \mathcal{R} \forall n \in \mathcal{N}$ (which will happen in finite number of steps), or the process will have converged before that.

	(iv) If $reward_n[r] = 0, \forall r \in \mathcal{R}, \forall n \in \mathcal{N}$, the question of convergence is equivalent to having $N$ balls thrown randomly and independently into $R$ bins and asking whether you can have exactly one ball in each bin -- or in our case, where $N = R$, have no empty bins. The probability of bin $r$ being empty is $\left( \frac{R - 1}{R} \right) ^ N$, i.e., being occupied is $1 - \left( \frac{R - 1}{R} \right) ^ N$. The probability of all the bins to be occupied is $\left( 1 - \left( \frac{R - 1}{R} \right) ^ N \right)^R$. The expected number of trials until this event occurs is $1 / \left( 1 - \left( \frac{R - 1}{R} \right) ^ N \right)^R$, which is finite, for finite $N, R$.
\end{proof}

\subsection{Complexity}

ALMA-Learning is an anytime algorithm. At each training time-step, we run ALMA once. Thus, the computational complexity is bounded by $T$ times the bound for ALMA, where $T$ denotes the number of training time-steps (see Equation \ref{Eq: convergence bound system}, where $N$ and $R$ denote the number of agents and resources, respectively, $p^* = f(loss^*)$, and $loss^*$ is given by the Equation \ref{Eq: loss* system}).

\small
\begin{equation} \label{Eq: convergence bound system}
	\mathcal{O}\left( T R \frac{2 - p^*}{2 (1 - p^*)} \left(\frac{1}{p^*} \log N + R \right) \right)
\end{equation}
\normalsize

\small
\begin{equation} \label{Eq: loss* system}
	loss^* = \underset{loss_n^r}{\argmin} \left( \underset{r \in \mathcal{R}, n \in \mathcal{N}}{\min}(loss_n^r), 1 - \underset{r \in \mathcal{R}, n \in \mathcal{N}}{\max}(loss_n^r) \right)
\end{equation}
\normalsize

\section{Modeling of the Meeting Scheduling Problem} \label{Appendix: Modeling of the Meeting Scheduling Problem}

\subsection{Problem Formulation} \label{ssec:prob}

Let $\mathcal{E} = \{E_1,\dots,E_n\}$ denote the set of events we want to schedule and $\mathcal{P} = \{P_1,\dots,P_m\}$ the set of participants. Additionally, we define a function mapping each event to the set of its participants $\participants: \mathcal{E} \to 2^\mathcal{P}$, where $2^\mathcal{P}$ denotes the power set of $\mathcal{P}$. Let $days$ and $slots$ denote the number of days and time slots per day of our calendar (e.g., $days = 7, slots = 24$ would define a calendar for one week where each slot is 1 hour long). In order to add length to each event we define an additional function $\len : \mathcal{E} \to \N$, where $\N$ denotes the set of natural numbers (excluding 0). We do not limit the length; this allows for events to exceed a single day and even the entire calendar if needed. Finally, we assume that each participant has a preference for attending certain events at a given starting time, given by:
\begin{equation*}
	\pref : \mathcal{E} \times \participants(\mathcal{E}) \times \{1,\dots, days\} \times \{1,\dots, slots\} \to [0,1].
\end{equation*}
\noindent
For example, $\pref(E_1, P_1, 2, 6) = 0.7$ indicates that participant $P_1$ has a preference of 0.7 to attend event $E_1$ starting at day 2 and slot 6. The preference function allows participants to differentiate between different kinds of meetings (personal, business, etc.), or assign priorities. For example, one could be available in the evening for personal events while preferring to schedule business meetings in the morning.

Finding a schedule consists of finding a function that assigns each event to a given starting time, i.e.,
\begin{equation*}
	\sched : \mathcal{E} \to \left(\{1,\dots, days\} \times \{1,\dots, slots\}\right) \cup \emptyset 
\end{equation*}

\noindent
where $\sched(E) = \emptyset$ means that the event $E$ is not scheduled. For the schedule to be \emph{valid}, the following hard constraints need to be met:

\begin{enumerate}
	\item Scheduled events with common participants must not overlap.
	\item An event must not be scheduled at a (day, slot) tuple if any of the participants is not available. We represent an unavailable participant as one that has a preference of 0 (as given by the function $\pref$) for that event at the given (day, slot) tuple.
\end{enumerate}

More formally the hard constraints are:
\small
\begin{gather*}
	\forall E_1\in \mathcal{E}, \forall E_2 \in \mathcal{E}\setminus\{E_1\} :  \\
	(\sched(E_1) \neq \emptyset \land \sched(E_2) \neq \emptyset \land \participants(E_1) \cap \participants(E_2) \neq \emptyset) \\
	\implies (\sched(E_1) > \aju(E_2) \lor \sched(E_2) > \aju(E_1))
\end{gather*}
\noindent
and
\begin{gather*}
	\forall E\in \mathcal{E} : \\
	(\exists P \in \mathcal{P} , \exists d \in [1,days], \exists s \in [1,slots] : \pref(E,P,d,s) = 0 \\
	  \implies \sched(E) \neq (d,s))
\end{gather*}
\normalsize

\noindent
where $\aju(E)$ returns the ending time (last slot) of the event $E$ as calculated by the starting time $\sched(E)$ and the length $\len(E)$.

In addition to finding a valid schedule, we focus on maximizing the social welfare, i.e., the sum of the preferences for all scheduled meetings:

\begin{equation*}
	\sum_{\substack{E \in \mathcal{E}\\ \sched(E) \neq \emptyset}} \sum_{P \in \mathcal{P}}  \pref(E,P,\sched(E))
\end{equation*}

\subsection{Modeling Events, Participants, and Utilities}

\paragraph{Event Length} \label{ssec:gen}

To determine the length of each generated event, we used information on real meeting lengths in corporate America in the 80s \cite{romano01}. That data were then used to fit a logistic curve, which in turn was used to yield probabilities for an event having a given length. The function was designed such that the maximum number of hours was 11. According to the data less than $1\%$ of meetings exceeded that limit.

\begin{figure}[t!]
	\centering
	\begin{subfigure}[t]{0.48\linewidth}
		\centering
		\includegraphics[width=1\linewidth]{./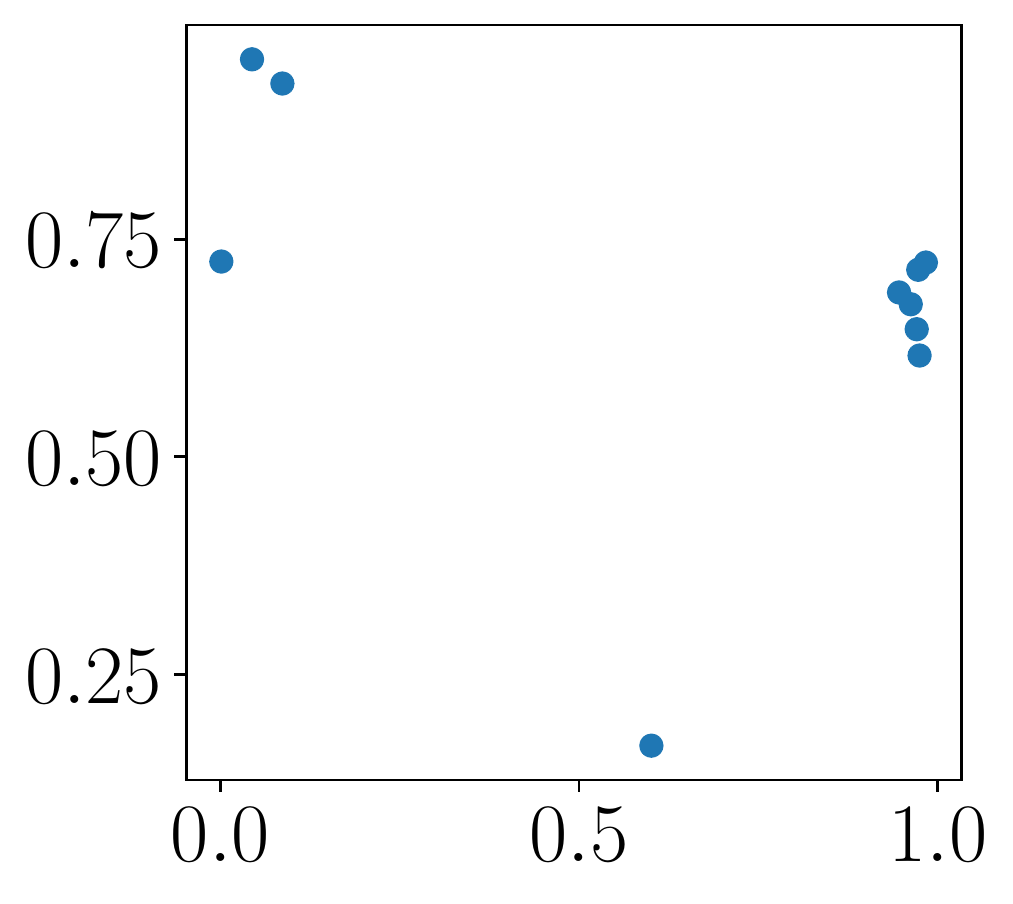}
		\caption{$p=10$}
	\end{subfigure}
	~ 
	\begin{subfigure}[t]{0.48\linewidth}
		\centering
		\includegraphics[width=1\linewidth]{./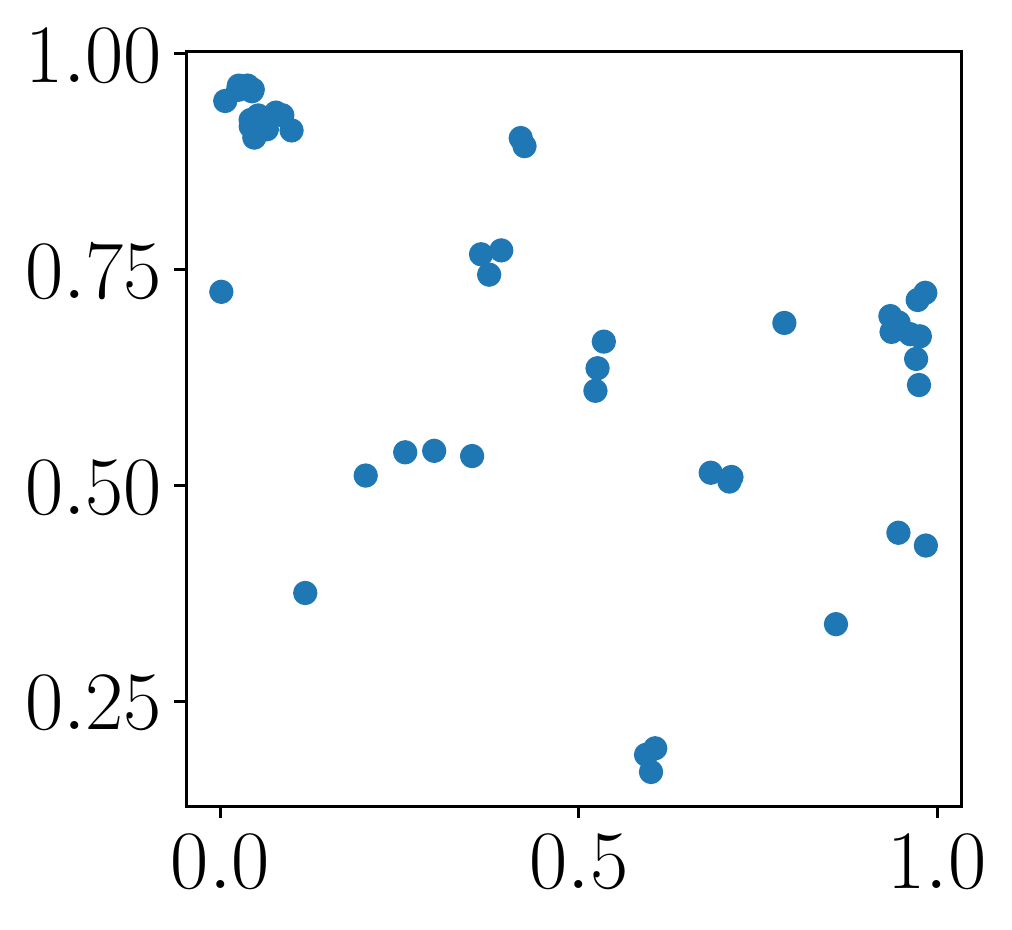}
		\caption{$p=50$}
	\end{subfigure}
	~ 
	\begin{subfigure}[t]{0.48\linewidth}
		\centering
		\includegraphics[width=1\linewidth]{./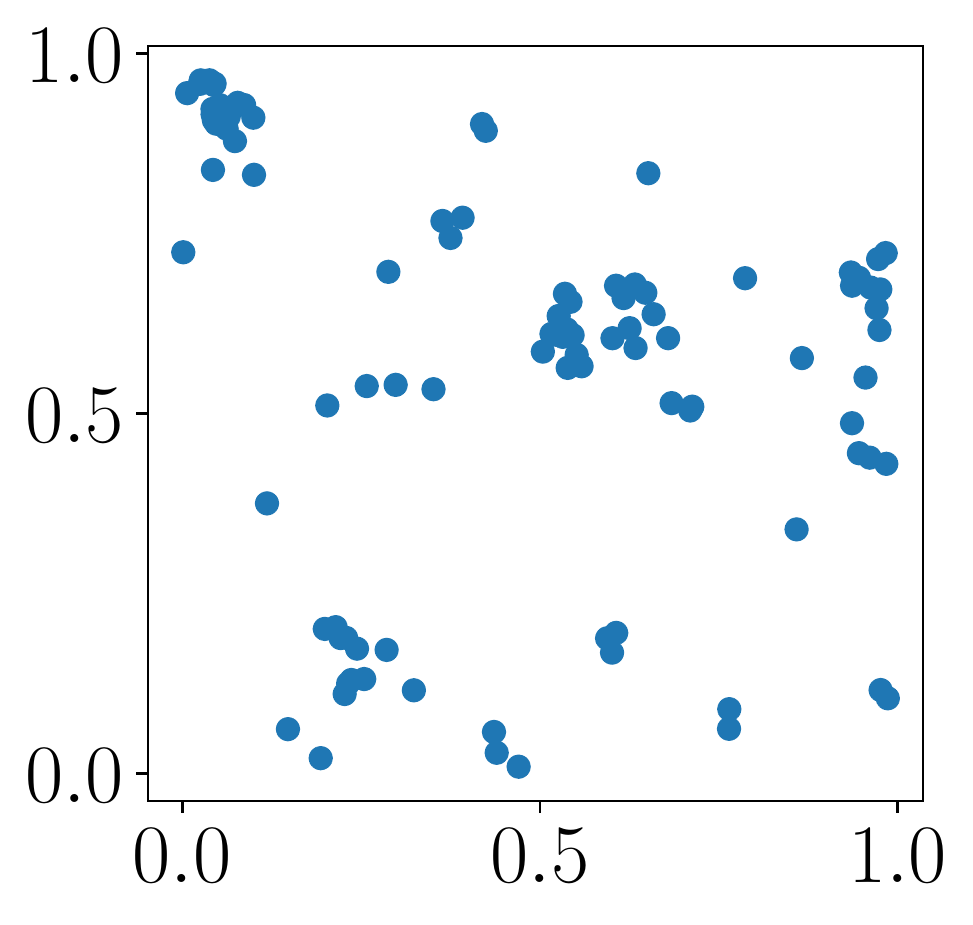}
		\caption{$p=100$}
	\end{subfigure}
	~ 
	\begin{subfigure}[t]{0.48\linewidth}
		\centering
		\includegraphics[width=1\linewidth]{./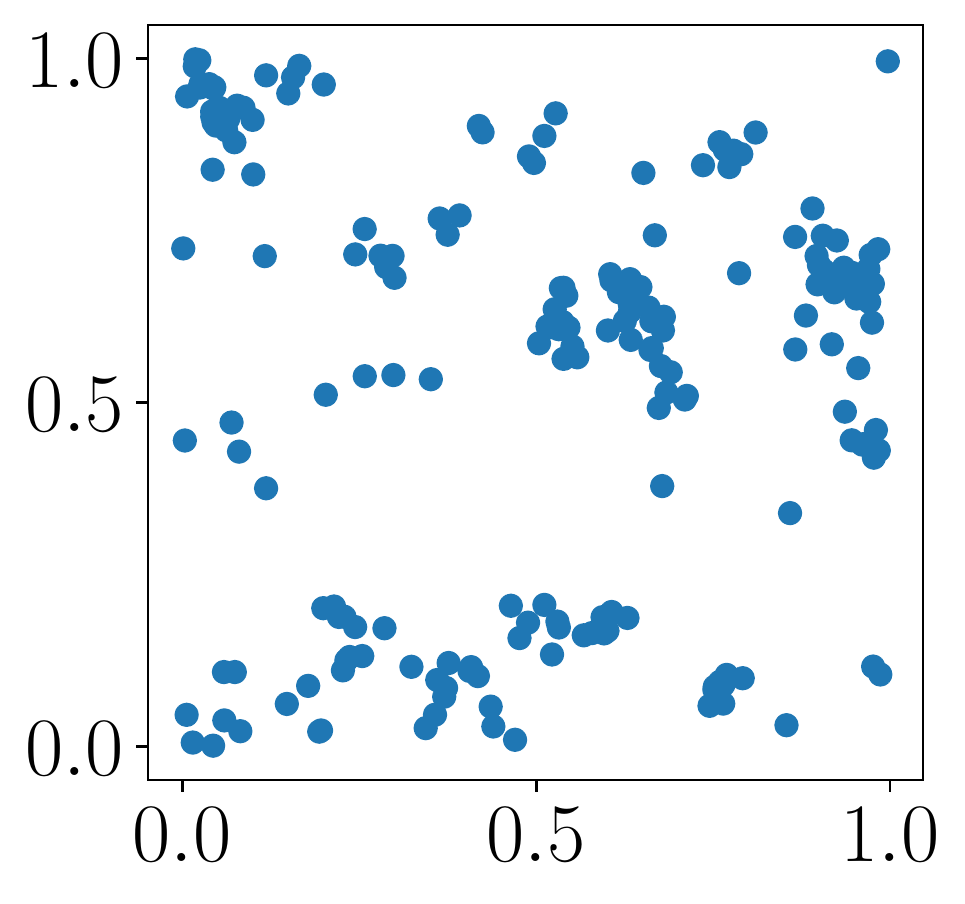}
		\caption{$p=200$}
	\end{subfigure}
	\caption{Generated datapoints on the $1\times 1$ plane for $p$ number of people.}
	\label{fig: appendixcluster}
\end{figure}

\paragraph{Participants} \label{ssec:att}

To determine the number of participants in an event, we used information on real meeting sizes \cite{romano01}. As before, we used the data to fit a logistic curve, which we used to sample the number of participants. The curve was designed such that no more than 90 people were chosen for an event at a time\footnote{The median is significantly lower.} (only $3\%$ of the meetings exceeds this number). Finally, for instances where the number of people in the entire system was below 90, that bound was reduced accordingly.

In order to simulate the naturally occurring clustering of people (see also \ref{ALMA: Computational Communication Complexity}) we assigned each participant to a point on a $1 \times 1$ plane in a way that enabled the emergence of clusters. Participants that are closer on the plane, are more likely to attend the same meeting. In more detail, we generated the points in an iterative way. The first participant was assigned a uniformly random point on the plane. For each subsequent participant, there is a $30\%$ probability that they also get assigned a uniformly random point. With a $70\%$ probability the person would be assigned a point based on a normal distribution centered at one of the previously created points. The selection of the latter point is based on the time of creation; a recently created point is exponentially more likely to be chosen as the center than an older one. This ensures the creation of clusters, while the randomness and the preference on recently generated points prohibits a single cluster to grow excessively. Figure \ref{fig: appendixcluster} displays an example of the aforedescribed process.

\paragraph{Utilities}
 
The utility function has two independent components. The first was designed to roughly reflect availability on an average workday. This function depends only on the time and is independent of the day. For example, a participant might prefer to schedule meetings in the morning rather than the afternoon (or during lunch time). A second function decreases the utility for an event over time. This function is independent of the time slot and only depends on the day and reflects the desire to schedule meetings sooner rather than days or weeks into the future. We generate the utility value for each participant / event combination using a normal distribution with the product of the aforementioned preference functions as mean and 0.1 as standard deviation. Figure \ref{img:dat} displays the distributions.

In addition to the above, we set a number of slots to 0 for each participant to simulate already scheduled meetings or other obligations. A maximum of $b$ slots were blocked this way each day. We chose which blocks to block using the same preference function for a day described above (Figure \ref{img:dat}, left). In other words, a slot with a high utility is also more likely to get blocked. Finally, for each event, we only consider the 24 most valuable slots.

\begin{figure}[t!]
	\centering
	\includegraphics[width=1\linewidth]{./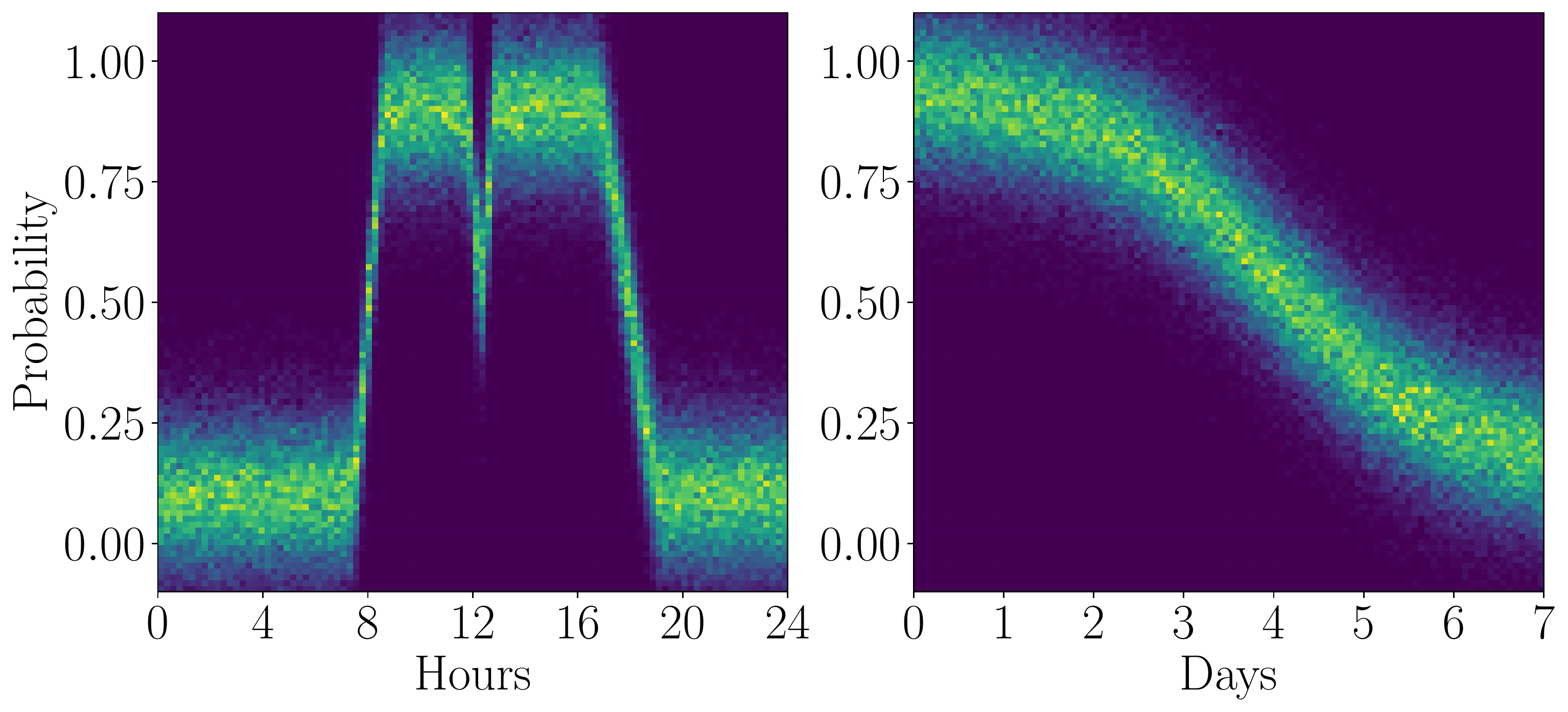}
	\caption{Distribution used for generating preference data. The image in the left displays the distribution of preference throughout a workday. The image on the right displays the distribution of preference throughout a week.}
	\label{img:dat}
\end{figure}

\subsection{Mapping the Meeting Scheduling Problem to the Allocation Problem} \label{Mapping the Meeting Scheduling Problem to the Allocation Problem}

We can map the meeting scheduling problem to the assignment problem of Section \ref{The Assignment Problem} if we consider each event as an agent and each starting time, i.e., (day,slot) tuple, as a resource. The utility for scheduling an event to a slot can be considered as the sum of the utilities (preferences) of the participants. There are, however, some important differences. For one, two events can be scheduled at the same time as long as their sets of respective participants are disjoint. On the other hand, two events can be conflicting even if they are not assigned to the same starting slot, if they overlap. However, if we detect these conflicts, we can apply ALMA (and ALMA-Learning) to solve the meeting scheduling problem. We describe how to do so in Section \ref{Implementation Details}.

\subsection{Implementation Details} \label{Implementation Details}

\paragraph{Event-agents and Representation-agents}

Each participant and each event is represented by an agent. We call those agents \emph{representation-agents} and \emph{event-agents}, respectively. For simplicity we use the variables introduced in \ref{ssec:prob} to describe these agents. I.e., $\mathcal{E}$ is the set of agents representing events and $\mathcal{P}$ the set of agents representing participants. For simplicity, we hereafter refer to a ($day$,$slot$) tuple simply as a slot. Initially each agent in $\mathcal{P}$ knows the events he wants to attend, their length, as well as the corresponding personal preferences. Additionally, each agent in $\mathcal{P}$ creates an empty personal calendar for events to be added later. The agents in $\mathcal{E}$ on the other hand know their set of participants.

\paragraph{Aggregation of Preferences}

Each event-agent in $\mathcal{E}$ needs to be able to aggregate the utilities of their attendees and compute the joint utility for the event. To do this, the representation-agents attending an event can simply relay their utilities to the corresponding event agents\footnote{Alternatively one could increase privacy by adding noise to the utilities or even by submitting a ranking of possible slots instead of actual utilities. The latter has the added benefit that the corresponding event-agent can choose a scale and does not rely on all users having the same metric when it comes to their utilities.}. The agents in $\mathcal{E}$ will then combine the received data (e.g., sum the utilities for each slot). Since we defined slots with a preference of 0 as unavailable, slots where one of the preferences is 0 will assume the value 0 instead of the sum. The slot entries will then be converted into a list $\mathcal{L} = [L_1,\dots,L_{days\cdot slots}]$, sorted in descending order, where each entry is a tuple of the form $L_i = \langle day_i,slot_i,preference_i \rangle$. Entries with a preference of 0 can simply be dropped.

\paragraph{ALMA: Collision Detection}

With the aforedescribed setup we can now apply ALMA. Each event-agent has a list of resources (i.e., slots) and the corresponding utility values. Since collisions are based on the attendees of each event, we let the representation-agents detect possible collisions. The event-agents send a message to all their attendees informing them of the currently contested slot. In turn, each representation-agent will check for possible collisions and relay that information back to the corresponding event-agents. Possible collisions are checked against all simultaneous attempts to acquire slots and an internal calendar for each representation-agent that contains already scheduled slots. This leaves the decision of collision detection to the representation-agents, which does not only make sure that the relevant information for scheduling a meeting stays within the group of participants, but also allows to enforce additional personal constraints, which were not considered in the problem description in Section \ref{ssec:prob}. For example, if two events are impossible to attend in succession due to locational constraints, the attendee could simply appear unavailable by reporting a collision. If there are no collisions, the event-agent will `acquire' the slot and inform his attendees. All informed representation-agents will delete the event-agent from their list of working agents. 

\paragraph{ALMA: Normalizing the Utilities}

The utilities for an event-agent are not in the range $[0,1]$, as required by ALMA; instead they are bounded by the number of attendees, as a result of summing up the individual utilities. Therefore, the loss is also not in $[0,1]$. There are several approaches for normalizing the utilities. One option is to simply divide by the number of attendees. This has the important downside that it distorts `the comparability of loss' between event-agents, since events with fewer attendees would be considered as important as ones with more participants even though the latter usually contributes more to social welfare. In other words, we want event-agents with more participants to be less likely to back-off in case of a conflict. Another option is to find a global variable to use for normalization such as the maximum number of attendees over all events or the highest utility value for any event. We chose the latter option.

\paragraph{ALMA: Computational and Communication Complexity} \label{ALMA: Computational Communication Complexity}

Let $R^*$ denote the maximum number of resources any agent is interested in, and $N^*$ denote the maximum number of agents competing for any resource. \citeauthor{ijcai201931} prove that by bounding these two quantities (i.e., we consider $R^*$ and $N^*$ to be constant functions of $N$, $R$), the expected number of steps any individual agent running ALMA requires to converge is independent of the total problem size (i.e., $N$, and $R$). For the meeting scheduling problem:

\begin{itemize}
	\item $R^*$: This corresponds to the maximum number of possible slots for any event. That number is bounded by the number of all available slots, given the calendar. However, in practice, meetings must be scheduled within a specific time-frame, thus each event would have a small, bounded number of slots available.
	\item $N^*$: This corresponds to the maximum number of event-agents with overlapping attendee sets competing for a slot. Assuming that we can cluster the total set of participants into smaller groups, such that either inter-group events are rare or preferred meeting times of groups do not overlap, then we could describe $N^*$ as the maximal number of events of any such group. This form of clustering naturally occurs in many real-life situations, e.g., people in the corporate world often build clusters by their work hours or their respective departments. 
\end{itemize}

To summarize, the number of rounds is bounded by the number of possible slots per event and the number of events per cluster, if a clustering exists.

In addition to the number of agents and resources, the complexity bound proven by \citeauthor{ijcai201931} depends on the worst-case back-off probability, $p^*$. To mitigate this problem and further speed-up the run-time in real-world applications, we can slowly reduce the steepness of the back-off curve over time.

Finally, each round requires $\Landau(E^* + A^*)$ messages, where $E^*$ is the maximum number of events for any single attendee and $A^*$ the maximum number of attendees for any single event.

\paragraph{MSRAC baseline}

The problem formulation of \citeauthor{benhassine07} requires each meeting to be associated with a `degree of importance'. To calculate this value, we take the average utility of all participants and multiply it with the number of participants\footnote{Since meetings with more participants contribute more to social welfare.}. This makes it highly unlikely that two events have the same importance value, thus reducing collisions of equally important events.

We modified the MSRAC algorithm \cite{benhassine07} accordingly, to account for events of varying length and preferences per event/slot combinations. In short, MSRAC works as follows. The event-agent asks all participants for the relevant utility information. These are then summed up and sorted. Unavailable slots, as well as slots occupied by more important events, are dropped. The event-agent proposes the first slot in the sorted list. Then, each participant considers all the proposed events for that slot and keeps the one with the highest importance value, specifically:

\begin{itemize}
	\item If the slot is free, accept the proposal.
	\item If the slot is occupied by a less important event, accept the proposal and invite the agent of the less important event to reschedule.
	\item If the slot is occupied by a more important event, reject proposal and invite agent to propose new slot.
	\item If the slot is occupied by an equally important event, keep the one with a higher utility and reject and reschedule the other one. 
\end{itemize}

If an event-agent receives a message to reschedule, it deletes the current proposal and proposes the next slot to its participants. This process is repeated until a stable state is reached.

\section{Evaluation} \label{Appendix: Evaluation}

\paragraph{Fairness Metrics}

Given the broad literature on fairness, we measured two different fairness indices:

\textbf{(a)}  \emph{The Jain index} \cite{DBLP:journals/corr/cs-NI-9809099}: Widely used in network engineering to determine whether users or applications receive a fair share of system resources. It exhibits a lot of desirable properties such as: population size independence, continuity, scale and metric independence, and boundedness. For an allocation of $N$ agents, such that the $n^{\text{th}}$ agent is alloted $x_n$, the Jain index is given by \autoref{eq: Jain Index}. An allocation $\mathbf{x} = (x_1, \dots, x_N) ^\top$ is considered fair, iff $\mathds{J}(\mathbf{x}) = 1$.

\small
\begin{equation} \label{eq: Jain Index}
	\mathds{J}(\mathbf{x}) = \frac{\left(\underset{n = 1}{\overset{N}{\sum}} x_n\right) ^ 2}{N \underset{n = 1}{\overset{N}{\sum}}  x_n ^ 2}
\end{equation}
\normalsize

\textbf{(b)} \emph{The Gini coefficient} \cite{gini1912variabilita}: One of the most commonly used measures of inequality by economists intended to represent the wealth distribution of a population of a nation. For an allocation game of $N$ agents, such that the $n^{\text{th}}$ agent is alloted $x_n$, the Gini coefficient is given by \autoref{eq: Gini coefficient}. A Gini coefficient of zero expresses perfect equality, i.e., an allocation is fair iff $\mathds{G}(\mathbf{x}) = 0$.

\small
\begin{equation} \label{eq: Gini coefficient}
	\mathds{G}(\mathbf{x}) = \frac{\underset{n = 1}{\overset{N}{\sum}} \underset{n' = 1}{\overset{N}{\sum}} \left| x_n - x_{n'} \right|}{2 N \underset{n = 1}{\overset{N}{\sum}}  x_n}
\end{equation}
\normalsize

\subsection{Test Case \#1: Synthetic Benchmarks} \label{Evaluation: Synthetic Benchmarks}

We run each configuration 16 times and report the average values. Since we have randomized algorithms, we also run each problem instance of each configuration 16 times. ALMA, and ALMA-Learning's parameters were set to: $\alpha = 0.1, \beta = 2, \epsilon = 0.01, L = 20$.

\paragraph{Social Welfare}

We begin with the loss in social welfare compared to the optimal solution. Figures \ref{fig: appendix map_tt_512_RDSW}, \ref{fig: appendix noisyCommonUtilities_tt_8192_RDSW}, \ref{fig: appendix noisyCommonUtilities_tt_8192_un_02_RDSW}, \ref{fig: appendix noisyCommonUtilities_tt_8192_un_04_RDSW}, and \ref{fig: appendix binary_tt_512_RDSW} present the results for the three test-cases. Table \ref{tb: supp social welfare} aggregates the results.

In all of the test-cases, ALMA-Leaning is able to quickly reach \emph{near-optimal} allocations. On par with previous results \cite{ijcai201931,danassis2019putting}, ALMA loses around $10\%$ to $15\%$\footnote{Note that both ALMA and ALMA-Learning use the same function $P(loss) = f(loss)^\beta$ (see Equation \ref{Eq: loss}) to compute the back-off probability, in order to provide a fair common ground for the evaluation.}. Is is worth noting that test-cases that are `harder' for ALMA -- specifically test-case (a) Map, where ALMA maintains the same gap on the optimal solution as the number of resources grow, and test-case (c) Binary Utilities, where ALMA exhibits the highest loss for 16 resources\footnote{Binary utilities represent a somewhat adversarial test-case for ALMA, since the agents can not utilize the more sophisticated back-off mechanism based on the loss (loss is either 1, or 0 in this case).} -- are `easier' to learn for ALMA-Learning. In the aforementioned two test cases, ALMA-Learning was able to learn near-optimal to optimal allocations in just $64-512$ training time-steps (in fact in certain cases it learns near-optimal allocations in as little as 32 time-steps). Contrary to that, in test-case (b) Noisy Common Utilities, ALMA-Learning requires significantly more time to learn (we trained for 8192 time-steps), especially for larger games ($R > 256$). Intuitively we believe that this is because ALMA already starts with a near optimal allocation, and given the high similarity on the agent's utility tables (especially for $\sigma = 0.1$), it requires a lot of fine-tunning to improve the result.

\paragraph{Fairness} \label{Fairness}

Next, we evaluate the fairness of the final allocation. Figures \ref{fig: appendix map_tt_512_JI}, \ref{fig: appendix noisyCommonUtilities_tt_8192_JI}, \ref{fig: appendix noisyCommonUtilities_tt_8192_un_02_JI}, \ref{fig: appendix noisyCommonUtilities_tt_8192_un_04_JI}, and \ref{fig: appendix binary_tt_512_JI} depict the Jain index (higher is better) for the three test-cases, while the Gini coefficient (lower is better) is presented in Figures \ref{fig: appendix map_tt_512_GI}, \ref{fig: appendix noisyCommonUtilities_tt_8192_GI}, \ref{fig: appendix noisyCommonUtilities_tt_8192_un_02_GI}, \ref{fig: appendix noisyCommonUtilities_tt_8192_un_04_GI}, and \ref{fig: appendix binary_tt_512_GI}. Tables \ref{tb: supp fairness jain} and \ref{tb: supp fairness gini} aggregate the results for both metrics. In all of the test-cases and for both indices, ALMA-Leaning achieves the most fair allocations, fairer than the optimal (in terms of social welfare) solution\footnote{To improve readability, we have omitted the results for test-case (b) Noisy Common Utilities for $\sigma = 0.2$ and $\sigma = 0.4$. In both cases ALMA-Learning performed better than the reported results for $\sigma = 0.1$.}.

\subsection{Test Case \#2: Meeting Scheduling}

We run each configuration 10 times and report the average values. The time limit for the CPLEX optimizer was set to 20 minutes. CPLEX ran on an Intel i7-7700HQ CPU (4 cores, 2.8 Ghz base clock) and 16 GB of ram. 

We used the SMAC\footnote{SMAC (sequential model-based algorithm configuration) is an automated algorithm configuration tool that uses Bayesian optimization (\url{https://www.automl.org/automated-algorithm-design/algorithm-configuration/smac/}).} tool \cite{hutter11} to choose the hyper-parameters. This resulted in selecting a logistic curve to compute the back-off probability, specifically Equation \ref{Eq: logistic}, where $\gamma = 15.72$. To compute the loss, we used Equation \ref{Eq: Meeting Scheduling loss}, where $k = 13$. ALMA-Learning's parameters were set to: $\alpha = 0.1, L = 20$.

\small
\begin{equation} \label{Eq: Meeting Scheduling loss}
	loss_n^i = \frac{\underset{j = i + 1}{\overset{k}{\sum}} \left( u_n(r_i) - u_n(r_j) \right)}{k - i}
\end{equation}
\normalsize

\small
\begin{equation} \label{Eq: logistic}
	f(loss) = \frac{1}{1 + e^{-\gamma(0.5 - loss)}}
\end{equation}
\normalsize

Figures \ref{fig: appendixmeeting sw}, \ref{fig: appendix meeting fairness jain}, and \ref{fig: appendix meeting fairness gini} present the results for the relative difference in social welfare (compared to CPLEX), and the Jain index and Gini coefficient, respectively. Moreover, in Figure \ref{fig: appendix meeting large} we depict the metrics for the hand-crafted larger instance. Finally, Tables \ref{tb: supp meeting scheduling social welfare}, \ref{tb: supp meeting scheduling jain}, and \ref{tb: supp meeting scheduling gini} aggregate the results.

\clearpage

\begin{table*}[t!]
	\centering
	\caption{Range of the average loss ($\%$) in social welfare compared the (centralized) optimal for the three different benchmarks.}
	\label{tb: supp social welfare}
	\resizebox{\textwidth}{!}{%
	\begin{tabular}{@{}lrcc@{}}
	\toprule
	\multicolumn{1}{c}{}                       & Greedy               & ALMA                 & \textbf{ALMA-Learning}                       \\ \midrule
	(a) Map                                    & $1.51\% - 18.71\%$   & $0.00\% - 9.57\%$    & $0.00\% - 0.89\%$ (512 training time-steps)  \\
	                                           & \multicolumn{1}{l}{} & \multicolumn{1}{l}{} & $0.00\% - 1.68\%$ (64 training time-steps)   \\
	(b) Noisy Common Utilities, $\sigma = 0.1$ & $8.13\% - 12.86\%$   & $2.96\% - 10.58\%$   & $1.34\% - 2.26\%$ (8192 training time-steps) \\
	(b) Noisy Common Utilities, $\sigma = 0.2$ & $5.40\% - 14.11\%$   & $1.37\% - 12.33\%$   & $0.35\% - 1.97\%$ (8192 training time-steps) \\
	(b) Noisy Common Utilities, $\sigma = 0.4$ & $0.79\% - 12.64\%$   & $0.75\% - 10.74\%$   & $0.05\% - 2.26\%$ (8192 training time-steps) \\
	(c) Binary Utilities                       & $0.10\% - 14.70\%$   & $0.00\% - 16.88\%$   & $0.00\% - 0.39\%$ (64 training time-steps)   \\ \bottomrule
	\end{tabular}%
	}
\end{table*}

\begin{table*}[t!]
	\centering
	\caption{Fairness -- Jain index (the higher, the better) for the three different benchmarks. In parenthesis we include the average improvement in fairness of ALMA-Learning compared to ALMA, Greedy, and Hungarian, respectively.}
	\label{tb: supp fairness jain}
	\resizebox{\textwidth}{!}{%
	\begin{tabular}{@{}llrcc@{}}
	\toprule
	\multicolumn{1}{c}{}                       & Hungarian     & Greedy        & ALMA          & \textbf{ALMA-Learning}                         \\ \midrule
	(a) Map                                    & $0.79 - 0.86$ & $0.70 - 0.73$ & $0.75 - 0.80$ & $0.86 - 0.89$ ($11.95\%$, $22.44\%$, $5.03\%$) \\
	(b) Noisy Common Utilities, $\sigma = 0.1$ & $0.81 - 0.92$ & $0.77 - 0.88$ & $0.79 - 0.89$ & $0.85 - 0.93$ ($5.58\%$, $7.58\%$, $1.81\%$)   \\
	(c) Binary Utilities                       & $0.84 - 1.00$ & $0.72 - 1.00$ & $0.82 - 0.98$ & $0.88 - 1.00$ ($10.18\%$, $5.36\%$, $0.58\%$)  \\ \bottomrule
	\end{tabular}%
	}
\end{table*}

\begin{table*}[t!]
	\centering
	\caption{Fairness -- Gini Coefficient (the lower, the better) for the three different benchmarks. In parenthesis we include the average improvement (decrease in inequality) of ALMA-Learning compared to ALMA, Greedy, and Hungarian, respectively.}
	\label{tb: supp fairness gini}
	\resizebox{\textwidth}{!}{%
	\begin{tabular}{@{}llrcc@{}}
	\toprule
	\multicolumn{1}{c}{}                       & Hungarian     & Greedy        & ALMA          & \textbf{ALMA-Learning}                            \\ \midrule
	(a) Map                                    & $0.19 - 0.23$ & $0.30 - 0.34$ & $0.25 - 0.28$ & $0.17 - 0.20$ ($-29.04\%$, $-42.91\%$, $-9.63\%$) \\
	(b) Noisy Common Utilities, $\sigma = 0.1$ & $0.17 - 0.24$ & $0.21 - 0.29$ & $0.19 - 0.28$ & $0.16 - 0.23$ ($-18.29\%$, $-23.66\%$, $-6.52\%$) \\
	(c) Binary Utilities                       & $0.00 - 0.16$ & $0.00 - 0.28$ & $0.02 - 0.18$ & $0.00 - 0.13$ ($-90.43\%$, $-92.61\%$, $-0.18\%$) \\ \bottomrule
	\end{tabular}%
	}
\end{table*}

\begin{figure*}[t!]
	\centering
	\begin{subfigure}[t]{0.32\textwidth}
		\centering
		\includegraphics[width = 1 \linewidth, trim={0em 0em 0em 0em}, clip]{./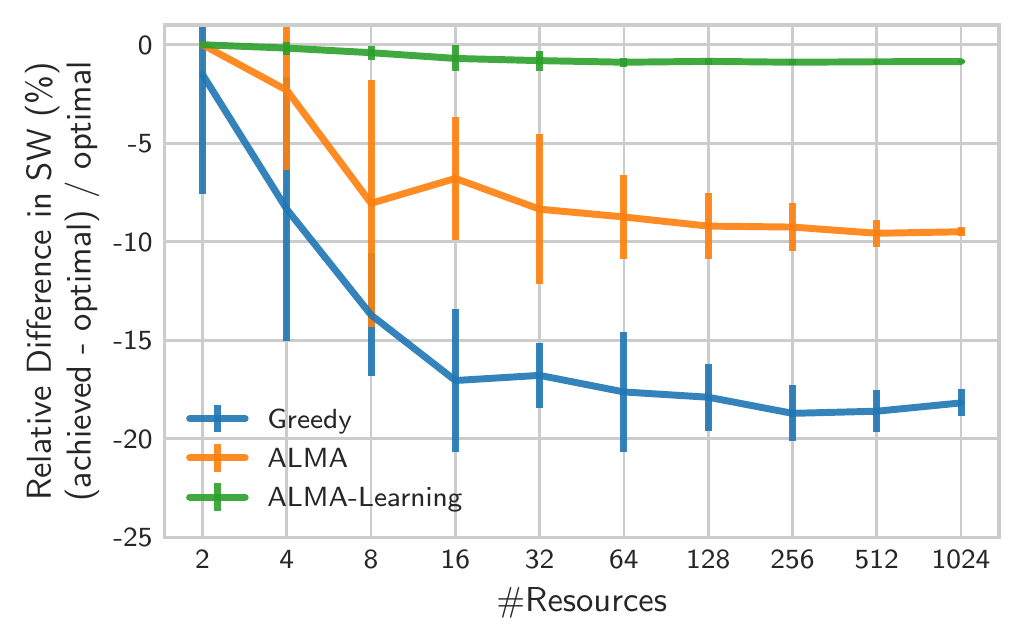}
		\caption{Relative Difference in Social Welfare (\%)}
		\label{fig: appendix map_tt_512_RDSW}
	\end{subfigure}
	~ 
	\begin{subfigure}[t]{0.32\textwidth}
		\centering
		\includegraphics[width = 1 \linewidth, trim={0em 0em 0em 0em}, clip]{./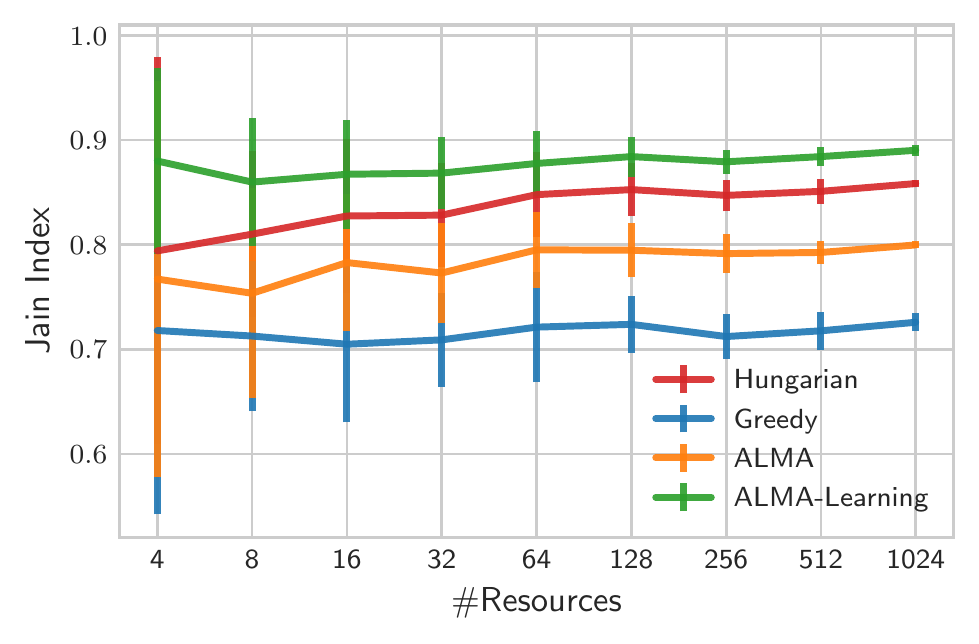}
		\caption{Jain Index (higher is better)}
		\label{fig: appendix map_tt_512_JI}
	\end{subfigure}
	~ 
	\begin{subfigure}[t]{0.32\textwidth}
		\centering
		\includegraphics[width = 1 \linewidth, trim={0em 0em 0em 0em}, clip]{./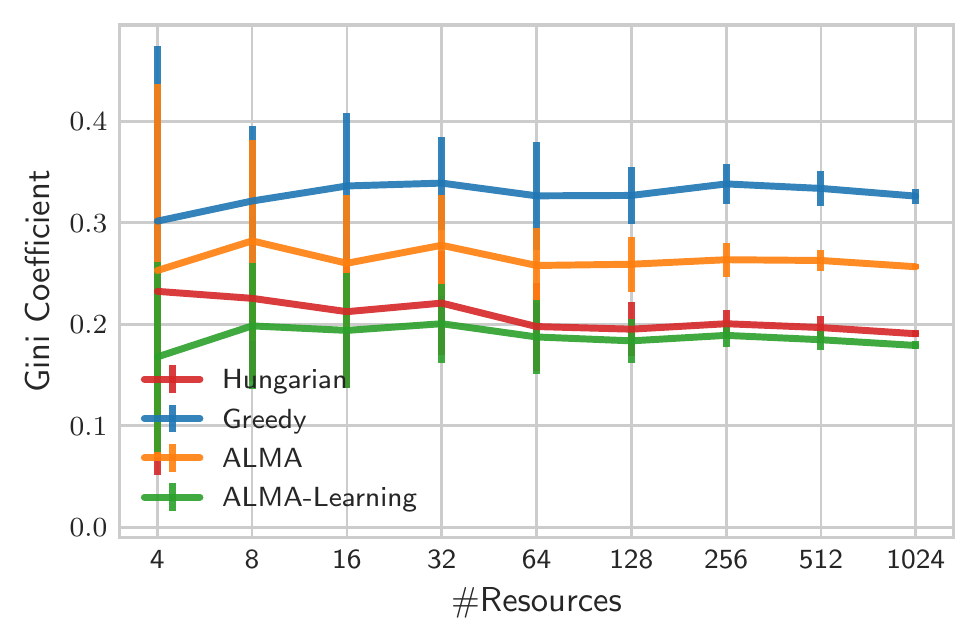}
		\caption{Gini Coefficient (lower is better)}
		\label{fig: appendix map_tt_512_GI}
	\end{subfigure}%
	\caption{Test-case (a): Map, we report the relative difference in social welfare, the Jain index, and the Gini coefficient, for increasing number of resources ($[2, 1024]$, $x$-axis in log scale), and $N = R$. For each problem instance, we trained ALMA-Learning for 512 time-steps.\medskip}
	\label{fig: appendix map_tt_512}
\end{figure*}

\begin{figure*}[t!]
	\centering
	\begin{subfigure}[t]{0.32\textwidth}
		\centering
		\includegraphics[width = 1 \linewidth, trim={0em 0em 0em 0em}, clip]{./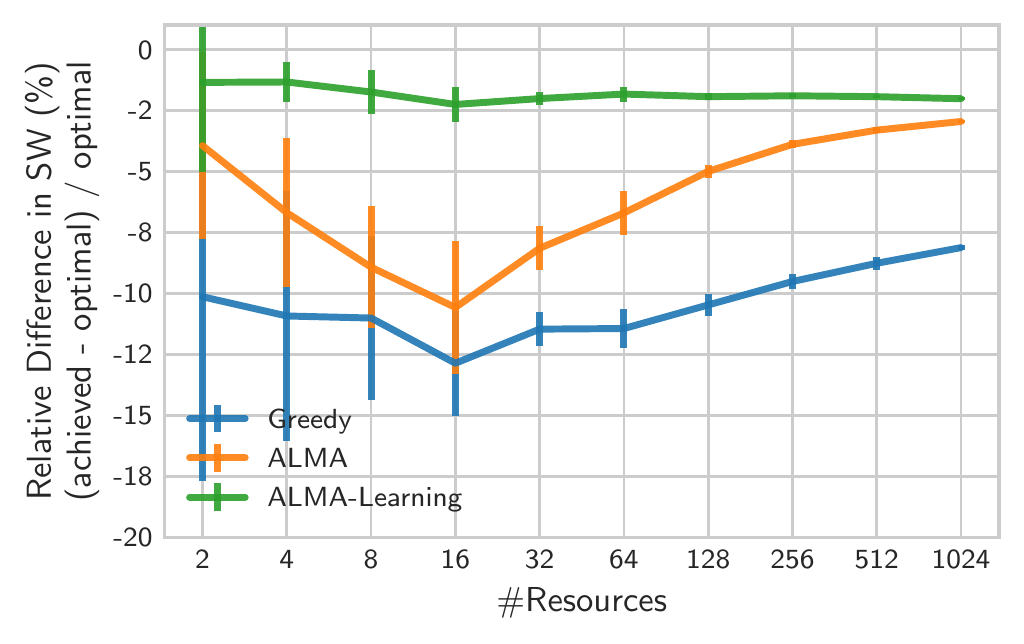}
		\caption{Relative Difference in Social Welfare (\%)}
		\label{fig: appendix noisyCommonUtilities_tt_8192_RDSW}
	\end{subfigure}
	~ 
	\begin{subfigure}[t]{0.32\textwidth}
		\centering
		\includegraphics[width = 1 \linewidth, trim={0em 0em 0em 0em}, clip]{./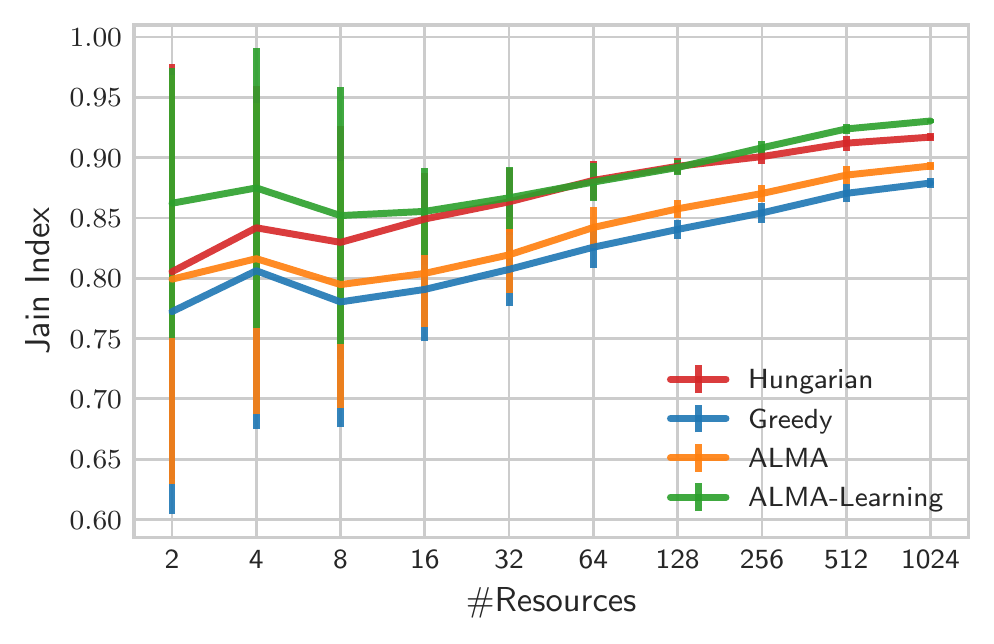}
		\caption{Jain Index (higher is better)}
		\label{fig: appendix noisyCommonUtilities_tt_8192_JI}
	\end{subfigure}
	~ 
	\begin{subfigure}[t]{0.32\textwidth}
		\centering
		\includegraphics[width = 1 \linewidth, trim={0em 0em 0em 0em}, clip]{./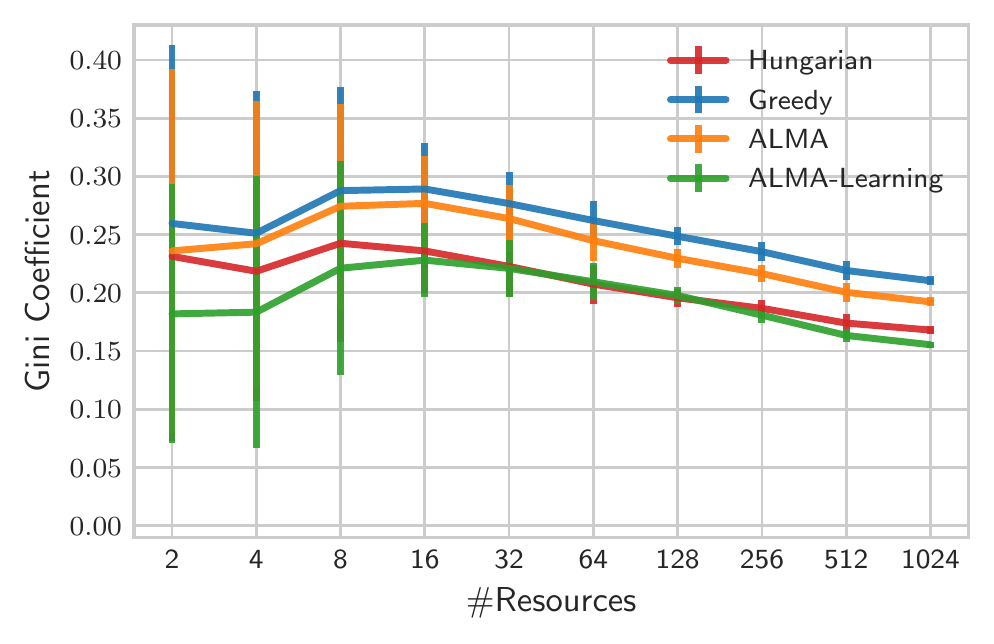}
		\caption{Gini Coefficient (lower is better)}
		\label{fig: appendix noisyCommonUtilities_tt_8192_GI}
	\end{subfigure}%
	\caption{Test-case (b): Noisy Common Utilities, $\sigma = 0.1$, we report the relative difference in social welfare, the Jain index, and the Gini coefficient, for increasing number of resources ($[2, 1024]$, $x$-axis in log scale), and $N = R$. For each problem instance, we trained ALMA-Learning for 8192 time-steps.\medskip}
	\label{fig: appendix noisyCommonUtilities_tt_8192}
\end{figure*}

\begin{figure*}[t!]
	\centering
	\begin{subfigure}[t]{0.32\textwidth}
		\centering
		\includegraphics[width = 1 \linewidth, trim={0em 0em 0em 0em}, clip]{./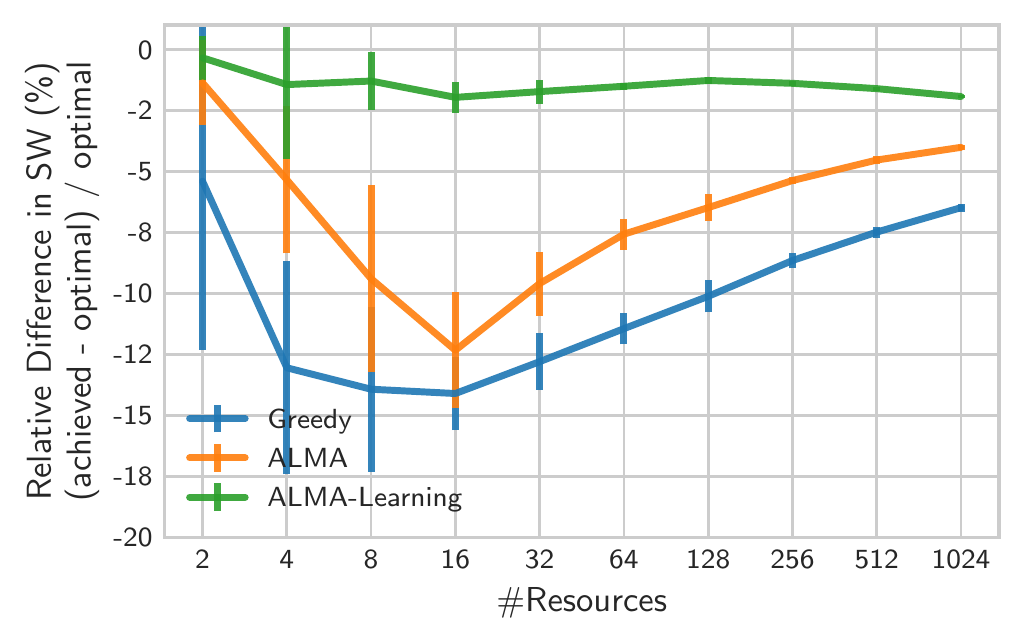}
		\caption{Relative Difference in Social Welfare (\%)}
		\label{fig: appendix noisyCommonUtilities_tt_8192_un_02_RDSW}
	\end{subfigure}
	~ 
	\begin{subfigure}[t]{0.32\textwidth}
		\centering
		\includegraphics[width = 1 \linewidth, trim={0em 0em 0em 0em}, clip]{./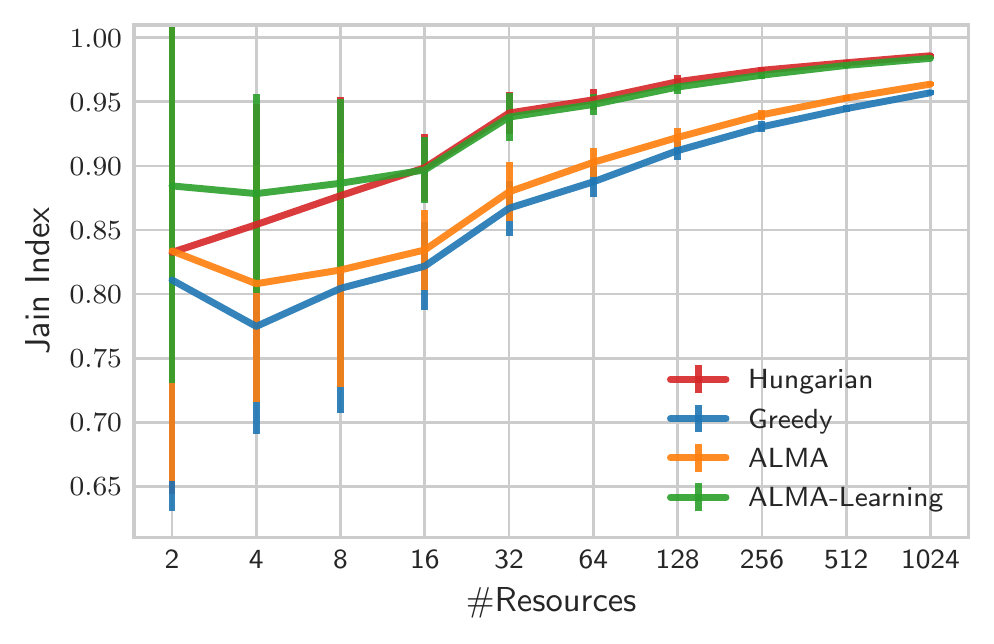}
		\caption{Jain Index (higher is better)}
		\label{fig: appendix noisyCommonUtilities_tt_8192_un_02_JI}
	\end{subfigure}
	~ 
	\begin{subfigure}[t]{0.32\textwidth}
		\centering
		\includegraphics[width = 1 \linewidth, trim={0em 0em 0em 0em}, clip]{./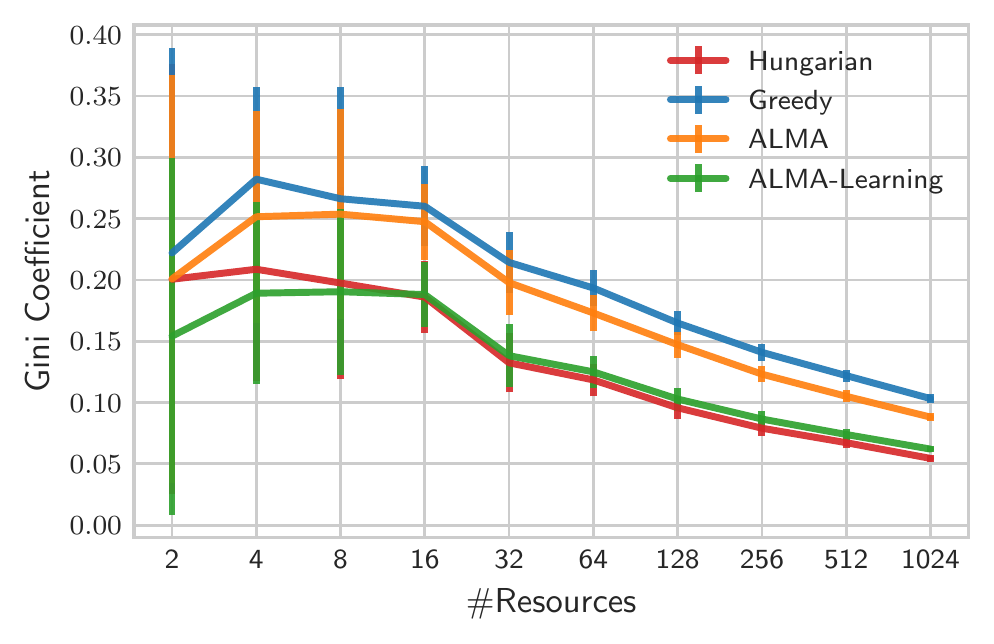}
		\caption{Gini Coefficient (lower is better)}
		\label{fig: appendix noisyCommonUtilities_tt_8192_un_02_GI}
	\end{subfigure}%
	\caption{Test-case (b): Noisy Common Utilities, $\sigma = 0.2$, we report the relative difference in social welfare, the Jain index, and the Gini coefficient, for increasing number of resources ($[2, 1024]$, $x$-axis in log scale), and $N = R$. For each problem instance, we trained ALMA-Learning for 8192 time-steps.\medskip}
	\label{fig: appendix noisyCommonUtilities_tt_8192_un_02}
\end{figure*}

\begin{figure*}[t!]
	\centering
	\begin{subfigure}[t]{0.32\textwidth}
		\centering
		\includegraphics[width = 1 \linewidth, trim={0em 0em 0em 0em}, clip]{./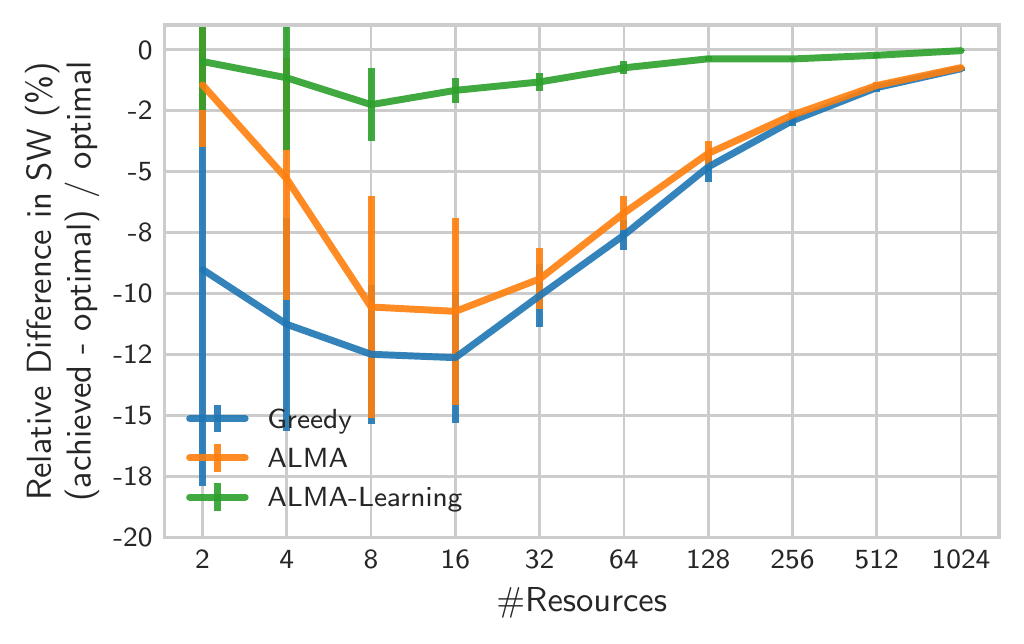}
		\caption{Relative Difference in Social Welfare (\%)}
		\label{fig: appendix noisyCommonUtilities_tt_8192_un_04_RDSW}
	\end{subfigure}
	~ 
	\begin{subfigure}[t]{0.32\textwidth}
		\centering
		\includegraphics[width = 1 \linewidth, trim={0em 0em 0em 0em}, clip]{./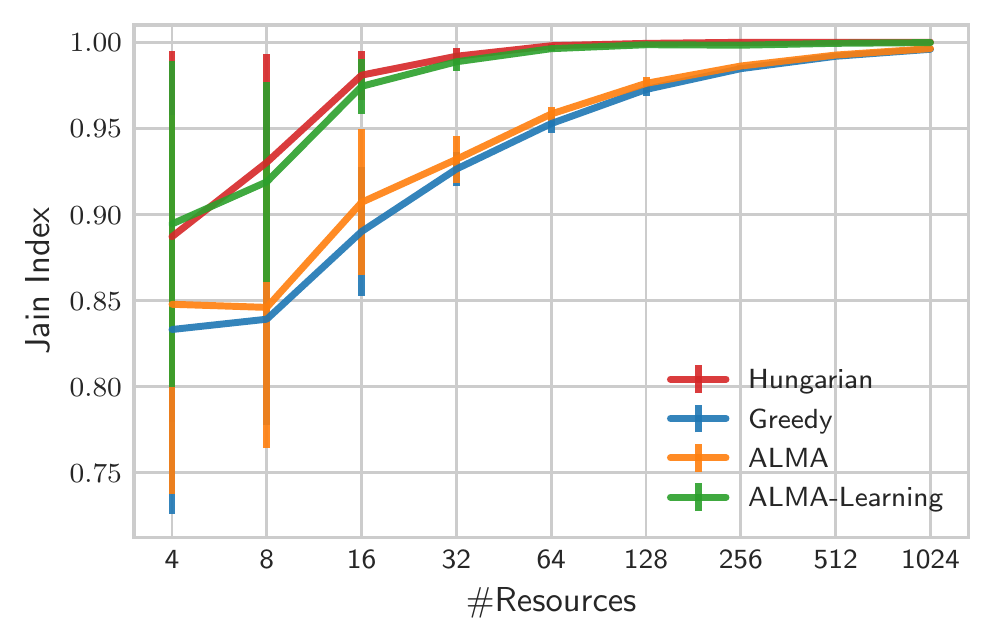}
		\caption{Jain Index (higher is better)}
		\label{fig: appendix noisyCommonUtilities_tt_8192_un_04_JI}
	\end{subfigure}
	~ 
	\begin{subfigure}[t]{0.32\textwidth}
		\centering
		\includegraphics[width = 1 \linewidth, trim={0em 0em 0em 0em}, clip]{./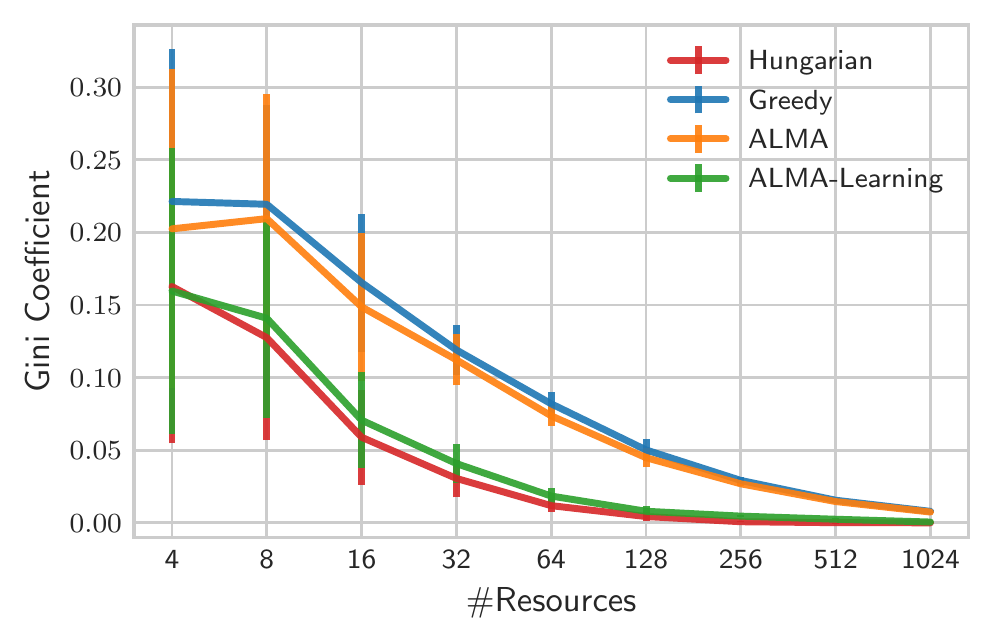}
		\caption{Gini Coefficient (lower is better)}
		\label{fig: appendix noisyCommonUtilities_tt_8192_un_04_GI}
	\end{subfigure}%
	\caption{Test-case (b): Noisy Common Utilities, $\sigma = 0.4$, we report the relative difference in social welfare, the Jain index, and the Gini coefficient, for increasing number of resources ($[2, 1024]$, $x$-axis in log scale), and $N = R$. For each problem instance, we trained ALMA-Learning for 8192 time-steps.\medskip}
	\label{fig: appendix noisyCommonUtilities_tt_8192_un_04}
\end{figure*}

\begin{figure*}[t!]
	\centering
	\begin{subfigure}[t]{0.32\textwidth}
		\centering
		\includegraphics[width = 1 \linewidth, trim={0em 0em 0em 0em}, clip]{./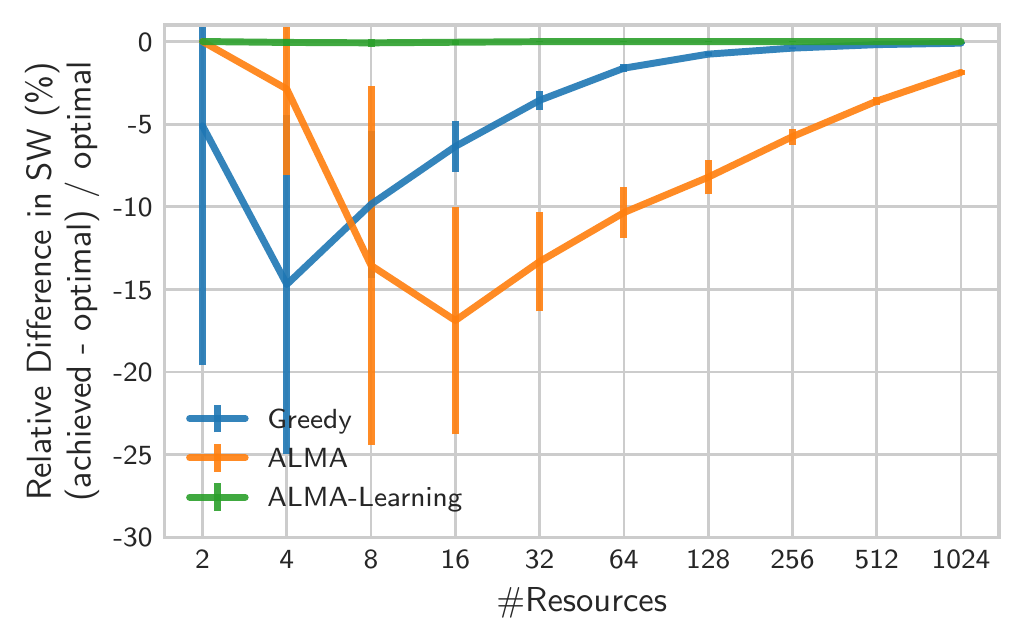}
		\caption{Relative Difference in Social Welfare (\%)}
		\label{fig: appendix binary_tt_512_RDSW}
	\end{subfigure}
	~ 
	\begin{subfigure}[t]{0.32\textwidth}
		\centering
		\includegraphics[width = 1 \linewidth, trim={0em 0em 0em 0em}, clip]{./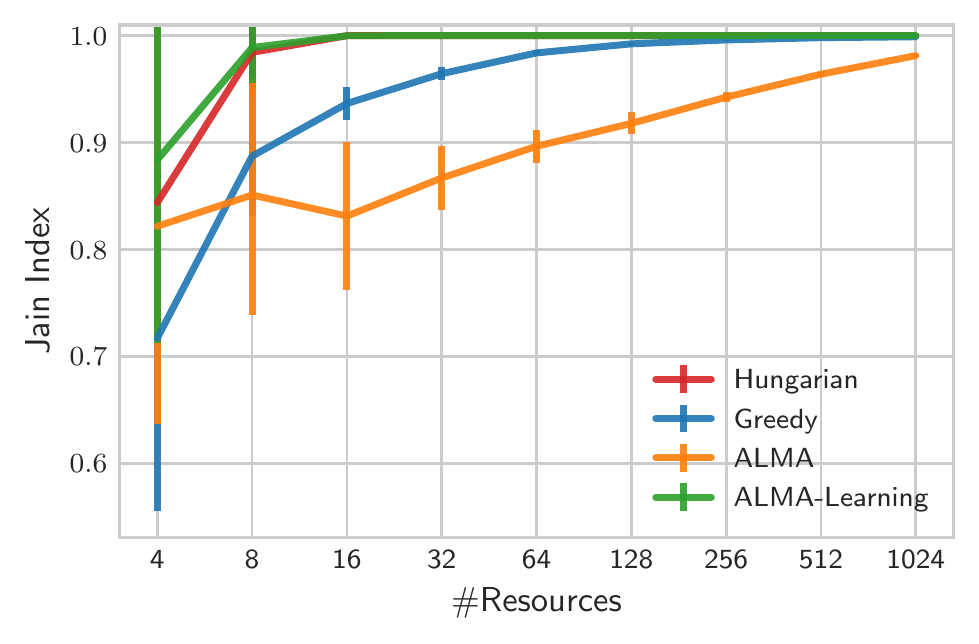}
		\caption{Jain Index (higher is better)}
		\label{fig: appendix binary_tt_512_JI}
	\end{subfigure}
	~ 
	\begin{subfigure}[t]{0.32\textwidth}
		\centering
		\includegraphics[width = 1 \linewidth, trim={0em 0em 0em 0em}, clip]{./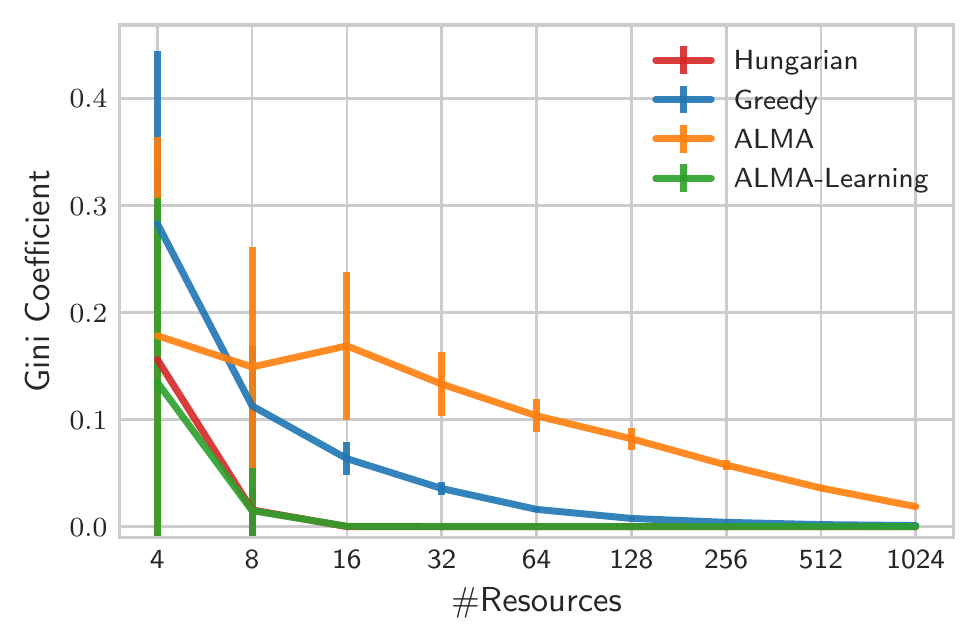}
		\caption{Gini Coefficient (lower is better)}
		\label{fig: appendix binary_tt_512_GI}
	\end{subfigure}%
	\caption{Test-case (c): Binary Utilities, we report the relative difference in social welfare, the Jain index, and the Gini coefficient, for increasing number of resources ($[2, 1024]$, $x$-axis in log scale), and $N = R$. For each problem instance, we trained ALMA-Learning for 64 time-steps.\medskip}
	\label{fig: appendix binary_tt_512}
\end{figure*}


\clearpage

\begin{figure*}[t!]
	\centering
	\begin{subfigure}[t]{0.32\textwidth}
		\centering
		\includegraphics[width = 1 \linewidth, trim={0em 0em 0em 0em}, clip]{./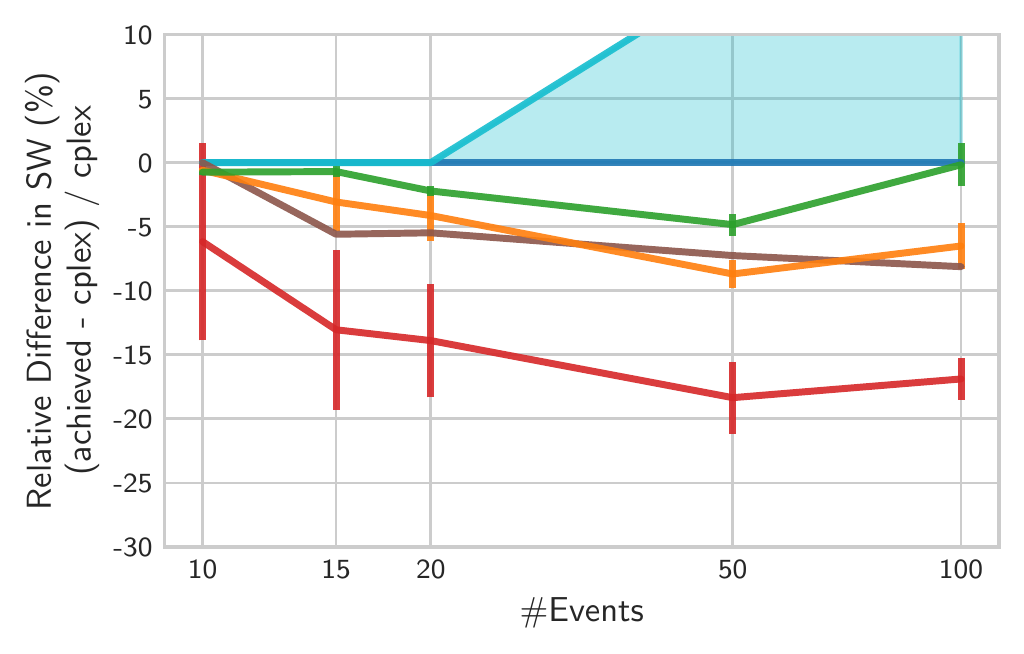}
		\subcaption{$p=20$}
	\end{subfigure}
	~ 
	\begin{subfigure}[t]{0.32\textwidth}
		\centering
		\includegraphics[width = 1 \linewidth, trim={0em 0em 0em 0em}, clip]{./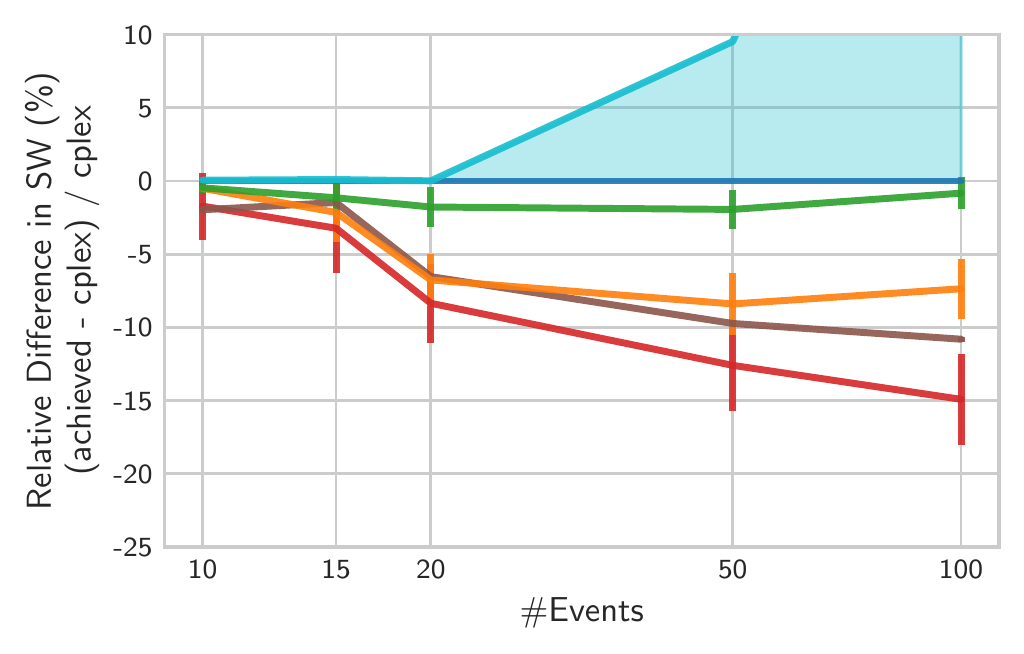}
		\subcaption{$p=30$}
	\end{subfigure}
	~ 
	\begin{subfigure}[t]{0.32\textwidth}
		\centering
		\includegraphics[width = 1 \linewidth, trim={0em 0em 0em 0em}, clip]{./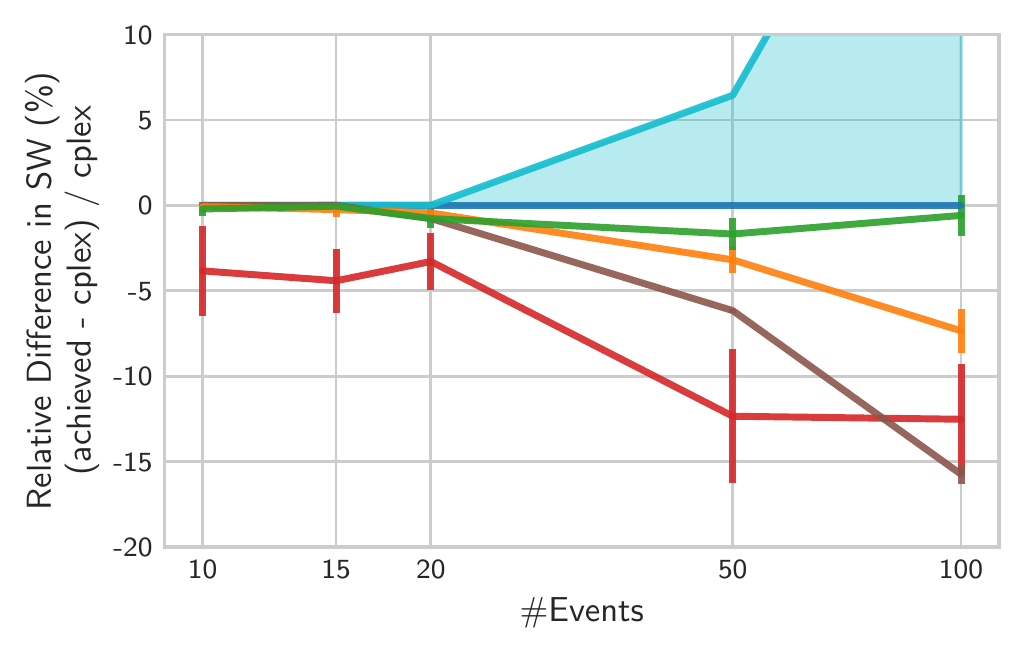}
		\subcaption{$p=50$}
	\end{subfigure}
	~ 
	\begin{subfigure}[t]{0.32\textwidth}
		\centering
		\includegraphics[width = 1 \linewidth, trim={0em 0em 0em 0em}, clip]{./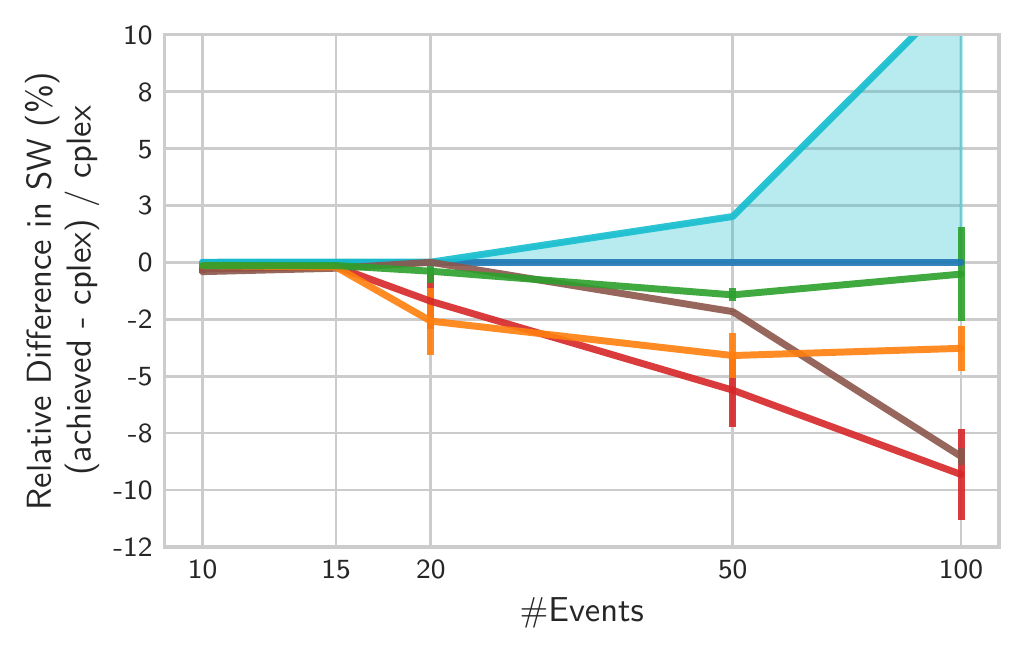}
		\subcaption{$p=100$}
	\end{subfigure}
	\begin{subfigure}[t]{0.32\textwidth}
		\vspace{-8 em}
		\includegraphics[scale = 0.8, trim={0em 0em 0em 0em}, clip]{./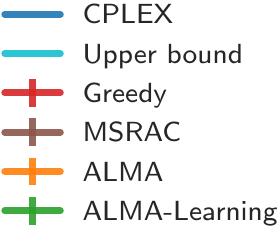}
	\end{subfigure}%
	\caption{Meeting Scheduling. Relative difference in social welfare (compared to CPLEX), for increasing number of events ($x$-axis in log scale). Results for various number of participants ($\mathcal{P}$). ALMA-Learning was trained for 512 time-steps.}
	\label{fig: appendixmeeting sw}
\end{figure*}

\begin{figure*}[t!]
	\centering
	\begin{subfigure}[t]{0.32\textwidth}
		\centering
		\includegraphics[width = 1 \linewidth, trim={0em 0em 0em 0em}, clip]{./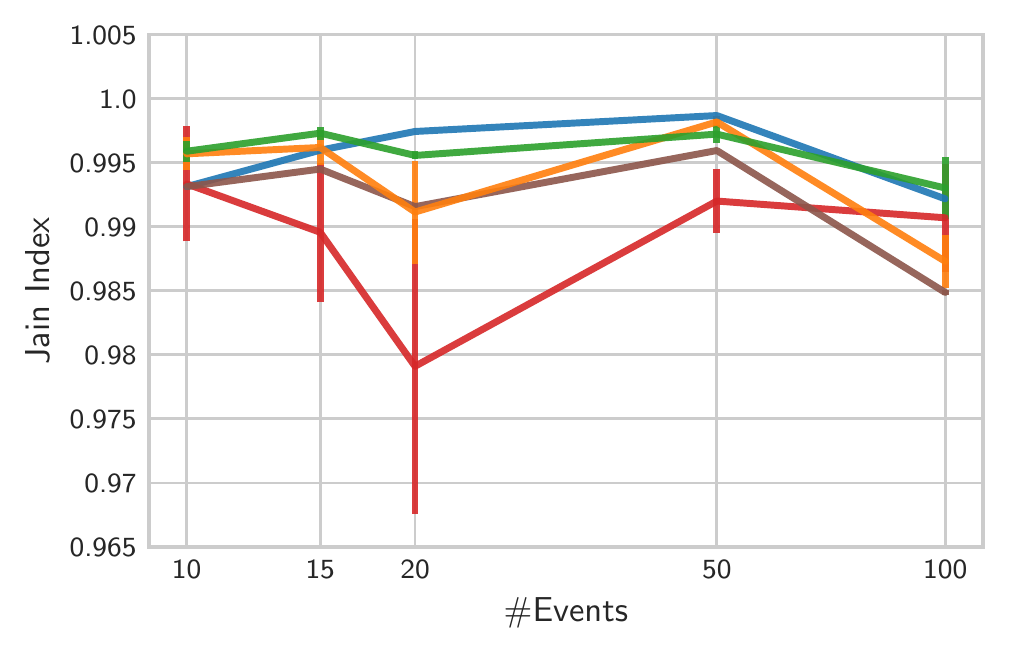}
		\subcaption{$p=20$}
	\end{subfigure}
	~ 
	\begin{subfigure}[t]{0.32\textwidth}
		\centering
		\includegraphics[width = 1 \linewidth, trim={0em 0em 0em 0em}, clip]{./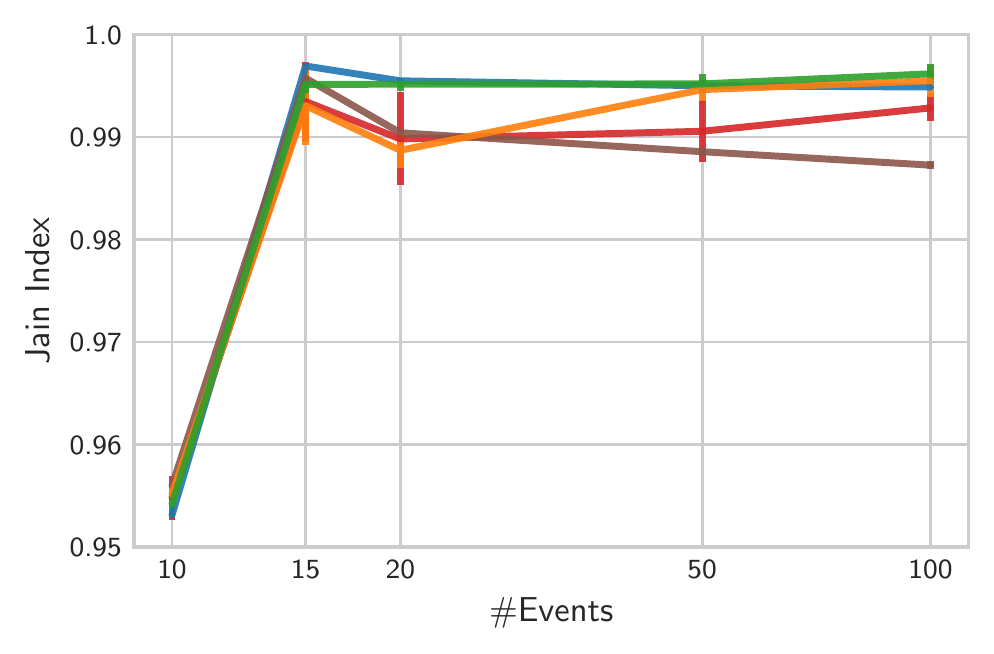}
		\subcaption{$p=30$}
	\end{subfigure}
	~ 
	\begin{subfigure}[t]{0.32\textwidth}
		\centering
		\includegraphics[width = 1 \linewidth, trim={0em 0em 0em 0em}, clip]{./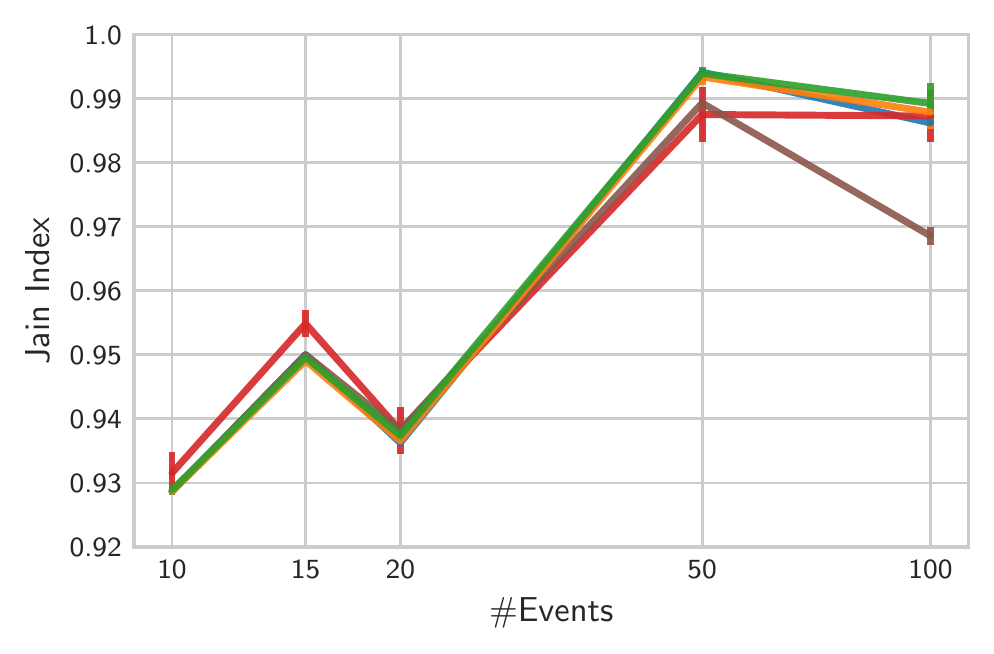}
		\subcaption{$p=50$}
	\end{subfigure}
	~ 
	\begin{subfigure}[t]{0.32\textwidth}
		\centering
		\includegraphics[width = 1 \linewidth, trim={0em 0em 0em 0em}, clip]{./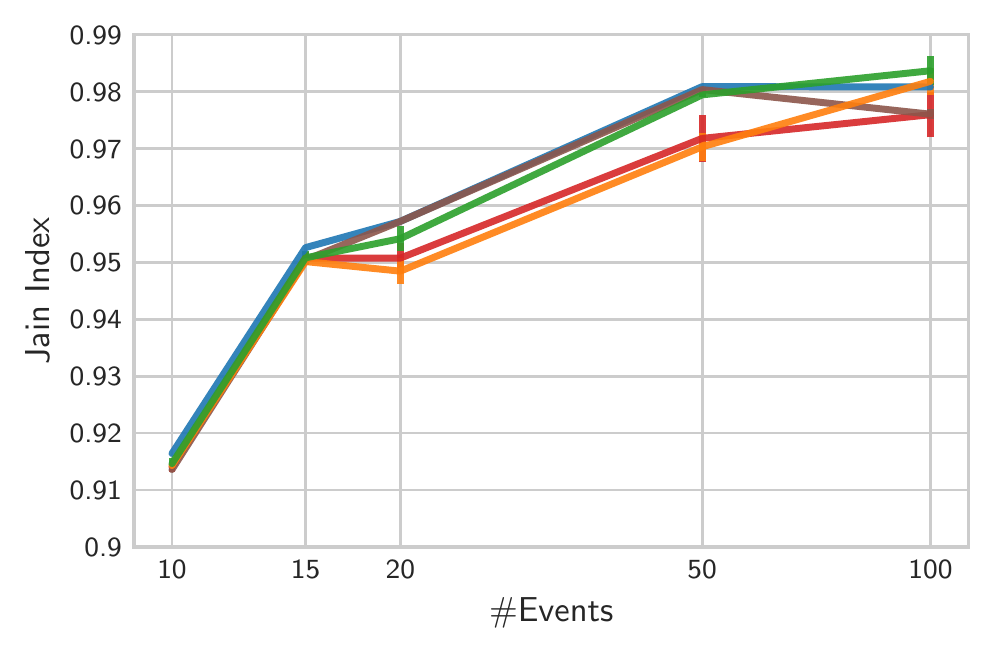}
		\subcaption{$p=100$}
	\end{subfigure}
	\begin{subfigure}[t]{0.32\textwidth}
		\vspace{-8 em}
		\includegraphics[scale = 0.8, trim={0em 0em 0em 0em}, clip]{./legend.pdf}
	\end{subfigure}%
	\caption{Meeting Scheduling. Fairness -- Jain index (the higher, the better), for increasing number of events, $\mathcal{E}$ ($x$-axis in log scale). Results for various number of participants ($\mathcal{P}$). ALMA-Learning was trained for 512 time-steps.}
	\label{fig: appendix meeting fairness jain}
\end{figure*}

\begin{figure*}[t!]
	\centering
	\begin{subfigure}[t]{0.32\textwidth}
		\centering
		\includegraphics[width = 1 \linewidth, trim={0em 0em 0em 0em}, clip]{./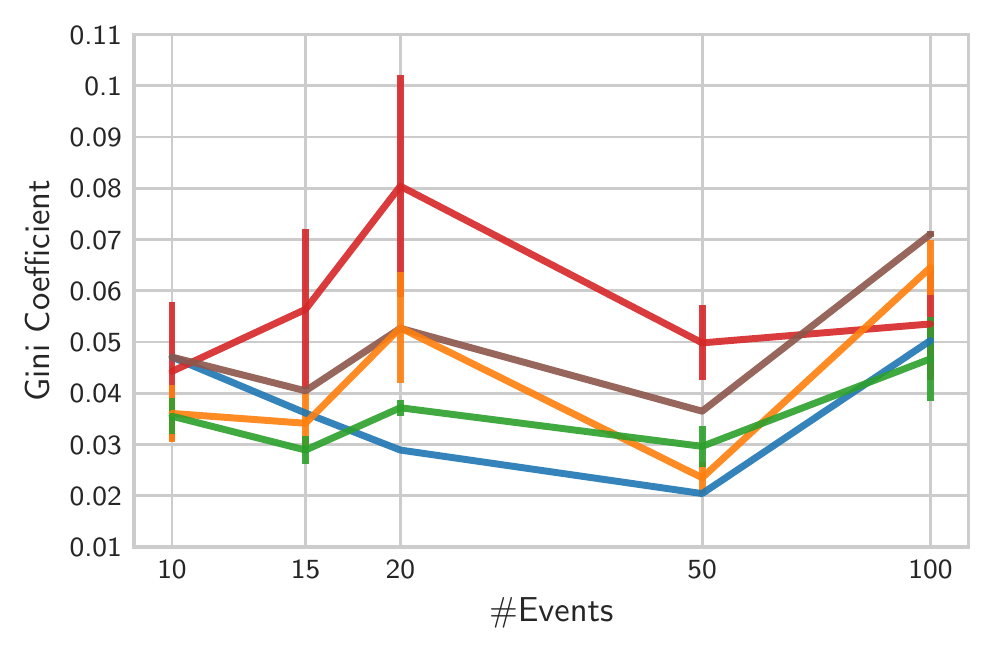}
		\subcaption{$p=20$}
	\end{subfigure}
	~ 
	\begin{subfigure}[t]{0.32\textwidth}
		\centering
		\includegraphics[width = 1 \linewidth, trim={0em 0em 0em 0em}, clip]{./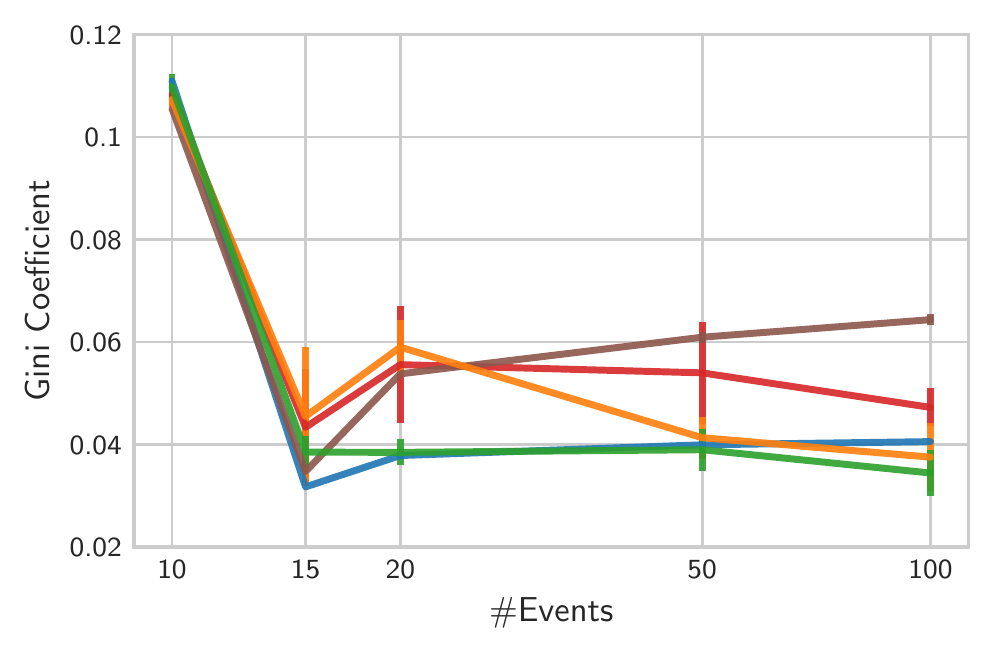}
		\subcaption{$p=30$}
	\end{subfigure}
	~ 
	\begin{subfigure}[t]{0.32\textwidth}
		\centering
		\includegraphics[width = 1 \linewidth, trim={0em 0em 0em 0em}, clip]{./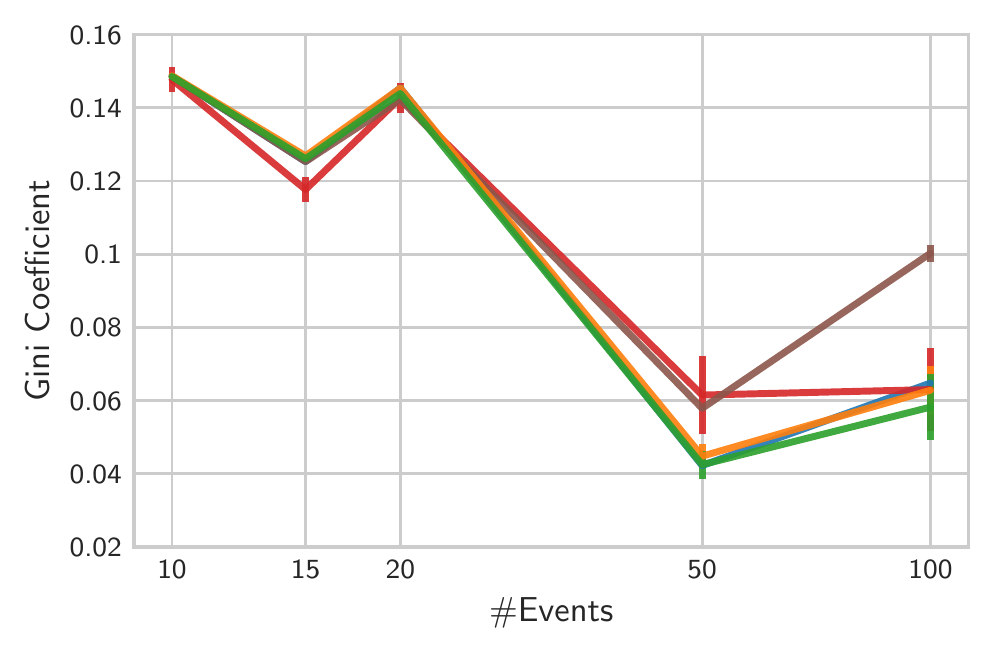}
		\subcaption{$p=50$}
	\end{subfigure}
	~ 
	\begin{subfigure}[t]{0.32\textwidth}
		\centering
		\includegraphics[width = 1 \linewidth, trim={0em 0em 0em 0em}, clip]{./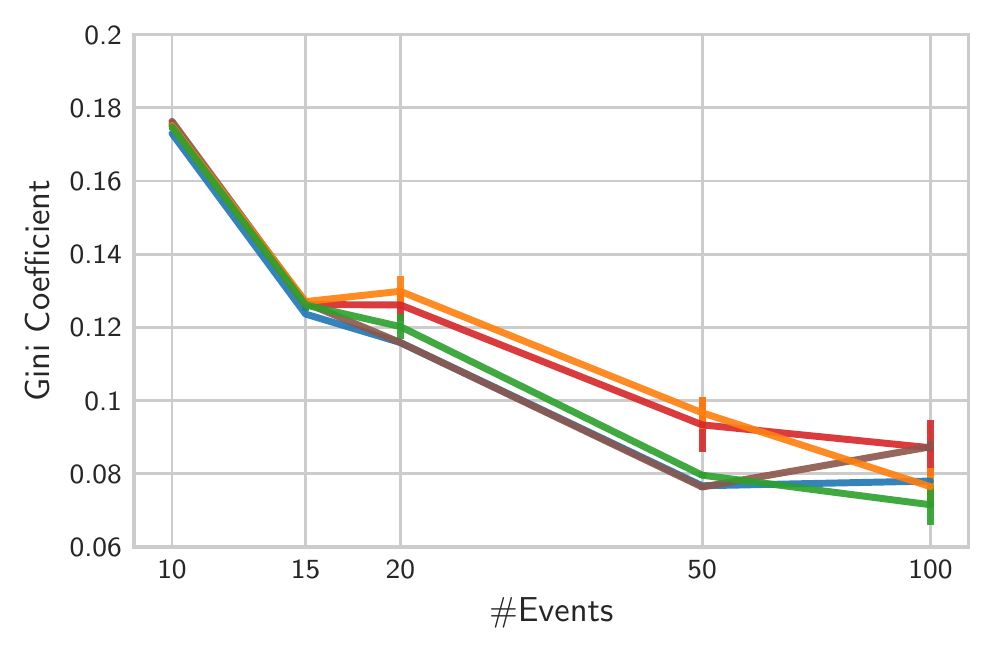}
		\subcaption{$p=100$}
	\end{subfigure}
	\begin{subfigure}[t]{0.32\textwidth}
		\vspace{-8 em}
		\includegraphics[scale = 0.8, trim={0em 0em 0em 0em}, clip]{./legend.pdf}
	\end{subfigure}%
	\caption{Meeting Scheduling. Fairness -- Gini coefficient (the lower, the better), for increasing number of events, $\mathcal{E}$ ($x$-axis in log scale). Results for various number of participants ($\mathcal{P}$). ALMA-Learning was trained for 512 time-steps.}
	\label{fig: appendix meeting fairness gini}
\end{figure*}

\begin{figure*}[t!]
	\centering
	\begin{subfigure}[t]{0.30\textwidth}
		\centering
		\includegraphics[width = 1 \linewidth, trim={0em 0em 0em 0em}, clip]{./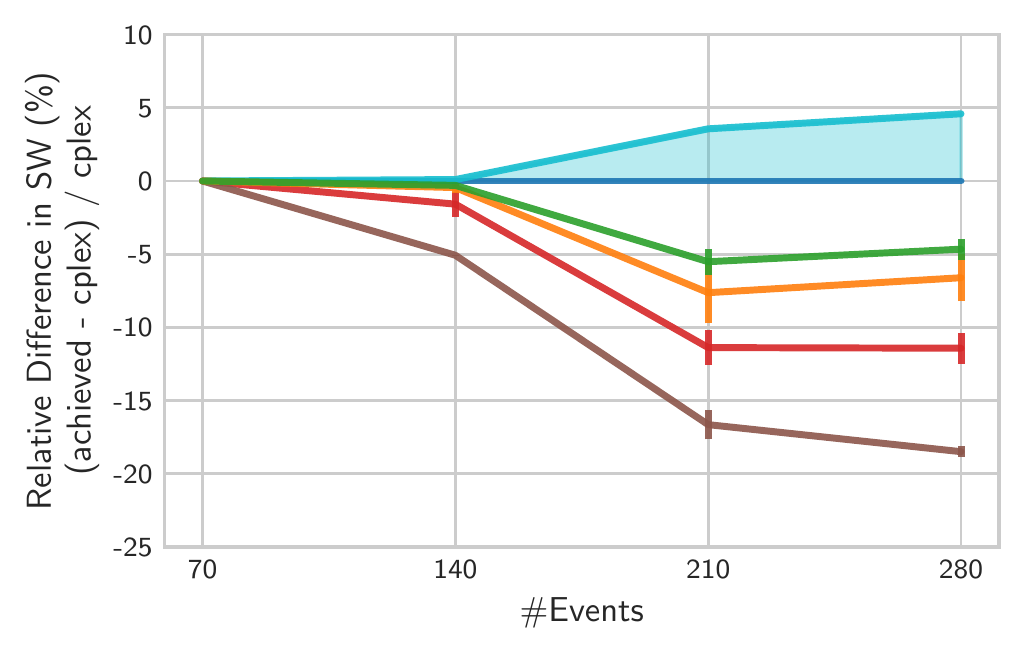}
		\subcaption{Relative difference in SW}
	\end{subfigure}
	~
	\begin{subfigure}[t]{0.30\textwidth}
		\centering
		\includegraphics[width = 1 \linewidth, trim={0em 0em 0em 0em}, clip]{./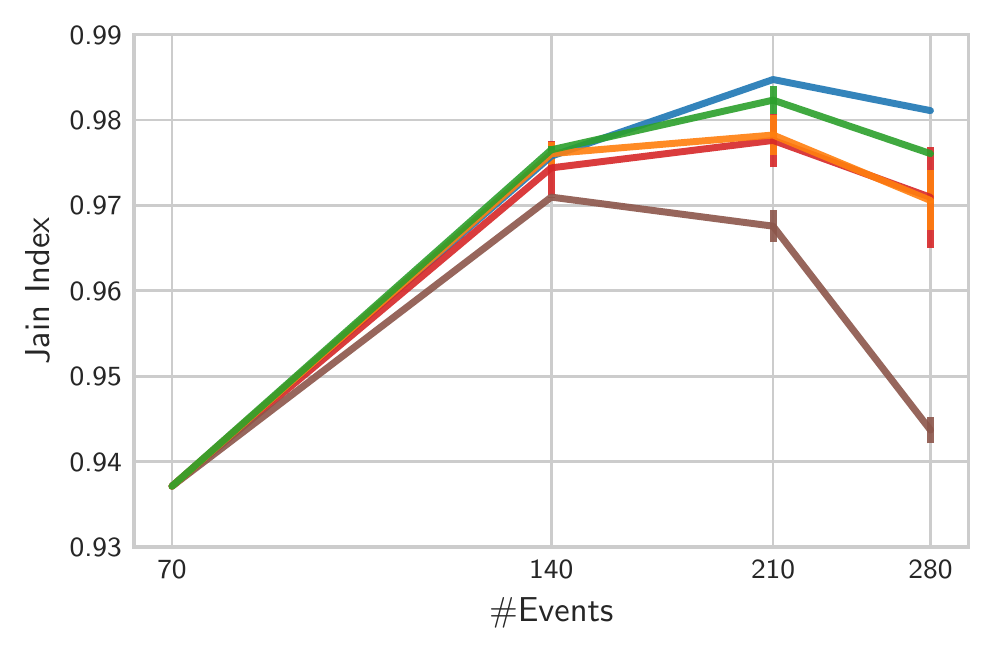}
		\subcaption{Jain index (higher is better)}
	\end{subfigure}
	~ 
	\begin{subfigure}[t]{0.30\textwidth}
		\centering
		\includegraphics[width = 1 \linewidth, trim={0em 0em 0em 0em}, clip]{./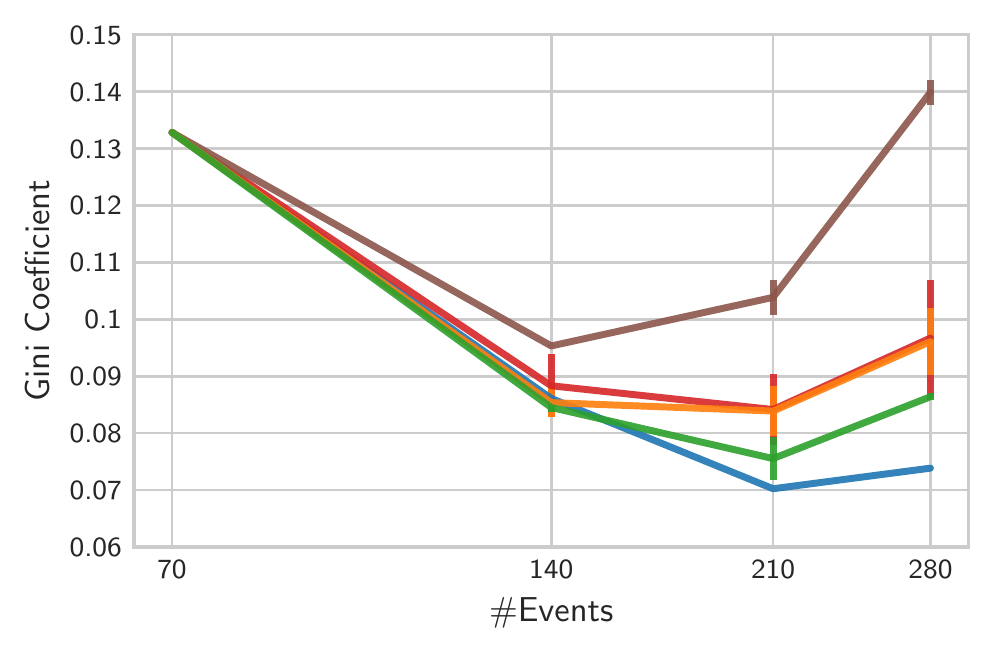}
		\subcaption{Gini coefficient (lower is better)}
	\end{subfigure}
	~
	\begin{subfigure}[t]{0.05\textwidth}
		\vspace{-7 em}
		\includegraphics[scale = 0.8, trim={0em 0em 0em 0em}, clip]{./legend.pdf}
	\end{subfigure}%
	\caption{Meeting Scheduling. Hand-crafted, large instances. Results for increasing number of events, $\mathcal{E}$ ($x$-axis in log scale), and 100 participants ($|\mathcal{P}| = 100$). ALMA-Learning was trained for 512 time-steps.}
	\label{fig: appendix meeting large}
\end{figure*}

\begin{table*}[t!]
	\centering
	\caption{Range of the average loss ($\%$) in social welfare compared to the CPLEX for increasing number of participants, $\mathcal{P}$ ($|\mathcal{E}| \in [10, 100]$). The last line corresponds to the loss compared to the upper bound for the optimal solution for the hand-crafted large test-case with $|\mathcal{P}| = 100, |\mathcal{E}| = 280$.}
	\label{tb: supp meeting scheduling social welfare}
	\resizebox{\textwidth}{!}{%
	\begin{tabular}{@{}rccccc@{}}
	\toprule
	                      & CPLEX               & Greedy                                 & MSRAC                                  & ALMA                                   & \textbf{ALMA-Learning}                \\ \midrule
	$|\mathcal{P}| = 20$  & N/A                 & $6.16 \pm 7.71 \% - 18.35 \pm 2.80 \%$ & $0.00 \pm 0.00 \% - 8.12 \pm 0.06 \%$  & $0.59 \pm 0.15 \% - 8.69 \pm 1.10 \%$  & $0.16 \pm 1.68 \% - 4.84 \pm 0.85 \%$ \\
	$|\mathcal{P}| = 50$  & N/A                 & $1.72 \pm 2.29 \% - 14.92 \pm 3.09 \%$ & $1.47 \pm 0.00 \% - 10.81 \pm 0.16 \%$ & $0.50 \pm 0.24 \% - 8.40 \pm 2.09 \%$  & $0.47 \pm 0.37 \% - 1.94 \pm 1.34 \%$ \\
	$|\mathcal{P}| = 50$  & N/A                 & $0.00 \pm 0.00 \% - 15.74 \pm 0.55 \%$ & $3.29 \pm 1.69 \% - 12.52 \pm 3.25 \%$ & $0.07 \pm 0.07 \% - 7.34 \pm 1.27 \%$  & $0.05 \pm 0.05 \% - 1.68 \pm 0.93 \%$ \\
	$|\mathcal{P}| = 100$ & N/A                 & $0.19 \pm 0.11 \% - 9.32 \pm 1.98 \%$  & $0.00 \pm 0.00 \% - 8.52 \pm 0.37 \%$  & $0.14 \pm 0.00 \% - 4.09 \pm 0.99 \%$  & $0.14 \pm 0.11 \% - 1.43 \pm 0.28 \%$ \\
	$|\mathcal{E}| = 280$ & $0.00 \% - 4.39 \%$ & $0.00 \pm 0.00 \% - 15.31 \pm 1.00 \%$ & $0.00 \pm 0.00 \% - 22.07 \pm 0.34 \%$ & $0.00 \pm 0.00 \% - 10.81 \pm 2.02 \%$ & $0.00 \pm 0.00 \% - 8.84 \pm 0.68 \%$ \\ \bottomrule
	\end{tabular}%
	}
\end{table*}

\begin{landscape}
\begin{table}[t!]
	\centering
	\caption{Meeting Scheduling. Fairness -- Jain index (the higher, the better), for increasing number of participants, $\mathcal{P}$ ($|\mathcal{E}| \in [10, 100]$). The last line corresponds to the hand-crafted large test-case with $|\mathcal{P}| = 100, |\mathcal{E}| = 280$. In parenthesis we include the average improvement in fairness compared to CPLEX.}
	\label{tb: supp meeting scheduling jain}
	\resizebox{0.8\paperheight}{!}{%
	\begin{tabular}{@{}rccccc@{}}
	\toprule
	                      & CPLEX         & Greedy                                                                   & MSRAC                                                                    & ALMA                                                                      & \textbf{ALMA-Learning}                                                   \\ \midrule
	$|\mathcal{P}| = 20$  & $0.99 - 1.00$ & $0.98 \pm 0.01 - 0.99 \pm 0.00 (98.16 \pm 1.16 \% - 100.02 \pm 0.45 \%)$ & $0.98 \pm 0.00 - 1.00 \pm 0.00 (99.26 \pm 0.02 \% - 100.00 \pm 0.00 \%)$ & $0.99 \pm 0.00 - 1.00 \pm 0.00 (99.37 \pm 99.37 \% - 100.26 \pm 0.13 \%)$ & $0.99 \pm 0.00 - 1.00 \pm 0.00 (99.81 \pm 0.03 \% - 100.28 \pm 0.08 \%)$ \\
	$|\mathcal{P}| = 50$  & $0.95 - 1.00$ & $0.95 \pm 0.00 - 0.99 \pm 0.00 (99.43 \pm 0.46 \% - 100.17 \pm 0.23 \%)$ & $0.96 \pm 0.00 - 1.00 \pm 0.00 (99.23 \pm 0.04 \% - 100.30 \pm 0.00 \%)$ & $0.96 \pm 0.00 - 1.00 \pm 0.00 (99.32 \pm 0.17 \% - 100.21 \pm 0.06 \%)$  & $0.95 \pm 0.00 - 1.00 \pm 0.00 (99.82 \pm 0.08 \% - 100.13 \pm 0.09 \%)$ \\
	$|\mathcal{P}| = 50$  & $0.93 - 0.99$ & $0.93 \pm 0.00 - 0.99 \pm 0.00 (99.34 \pm 0.43 \% - 100.51 \pm 0.23 \%)$ & $0.93 \pm 0.00 - 0.99 \pm 0.00 (98.21 \pm 0.14 \% - 100.21 \pm 0.00 \%)$ & $0.93 \pm 0.00 - 0.99 \pm 0.00 (99.90 \pm 99.90 \% - 100.17 \pm 0.26 \%)$ & $0.93 \pm 0.00 - 0.99 \pm 0.00 (99.95 \pm 0.03 \% - 100.31 \pm 0.32 \%)$ \\
	$|\mathcal{P}| = 100$ & $0.92 - 0.98$ & $0.91 \pm 0.00 - 0.98 \pm 0.00 (99.07 \pm 0.42 \% - 99.81 \pm 0.10 \%)$  & $0.91 \pm 0.00 - 0.98 \pm 0.00 (99.51 \pm 0.08 \% - 100.00 \pm 0.00 \%)$ & $0.91 \pm 0.00 - 0.98 \pm 0.00 (98.92 \pm 98.92 \% - 100.10 \pm 0.24 \%)$ & $0.91 \pm 0.00 - 0.98 \pm 0.00 (99.68 \pm 0.23 \% - 100.29 \pm 0.26 \%)$ \\
	$|\mathcal{E}| = 280$ & $0.94 - 0.98$ & $0.94 \pm 0.00 - 0.98 \pm 0.00 (98.97 \pm 0.60 \% - 100.00 \pm 0.00 \%)$ & $0.94 \pm 0.00 - 0.97 \pm 0.00 96.19 \pm 0.16 \% - 100.00 \pm 0.00 \%)$  & $0.94 \pm 0.00 - 0.98 \pm 0.00 (98.93 \pm 98.93 \% - 100.03 \pm 0.14 \%)$ & $0.94 \pm 0.00 - 0.98 \pm 0.00 (99.49 \pm 0.01 \% - 100.08 \pm 0.04 \%)$ \\ \bottomrule
	\end{tabular}%
	}
\end{table}

\begin{table}[t!]
	\centering
	\caption{Meeting Scheduling. Fairness -- Gini coefficient (the lower, the better), for increasing number of participants, $\mathcal{P}$ ($|\mathcal{E}| \in [10, 100]$). The last line corresponds to the hand-crafted large test-case with $|\mathcal{P}| = 100, |\mathcal{E}| = 280$. In parenthesis we include the average improvement in fairness compared to CPLEX.}
	\label{tb: supp meeting scheduling gini}
	\resizebox{0.8\paperheight}{!}{%
	\begin{tabular}{@{}rccccc@{}}
	\toprule
	                      & CPLEX         & Greedy                                                                     & MSRAC                                                                     & ALMA                                                                       & \textbf{ALMA-Learning}                                                    \\ \midrule
	$|\mathcal{P}| = 20$  & $0.02 - 0.05$ & $0.04 \pm 0.01 - 0.08 \pm 0.02 (94.07 \pm 28.94 \% - 278.19 \pm 75.03 \%)$ & $0.04 \pm 0.00 - 0.07 \pm 0.00 (100.00 \pm 0.00 \% - 182.22 \pm 0.00 \%)$ & $0.02 \pm 0.00 - 0.06 \pm 0.01 (76.63 \pm 76.63 \% - 182.55 \pm 37.40 \%)$ & $0.03 \pm 0.00 - 0.05 \pm 0.01 (75.48 \pm 7.50 \% - 145.14 \pm 19.93 \%)$ \\
	$|\mathcal{P}| = 50$  & $0.03 - 0.11$ & $0.04 \pm 0.01 - 0.11 \pm 0.00 (97.62 \pm 2.09 \% - 146.89 \pm 30.29 \%)$  & $0.03 \pm 0.00 - 0.11 \pm 0.00 (95.15 \pm 0.00 \% - 158.71 \pm 2.79 \%)$ & $0.04 \pm 0.01 - 0.11 \pm 0.00 (92.52 \pm 92.52 \% - 155.91 \pm 14.12 \%)$ & $0.03 \pm 0.00 - 0.11 \pm 0.00 (84.89 \pm 11.24 \% - 121.39 \pm 9.84 \%)$ \\
	$|\mathcal{P}| = 50$  & $0.04 - 0.15$ & $0.06 \pm 0.01 - 0.15 \pm 0.00 (93.90 \pm 2.76 \% - 145.61 \pm 25.23 \%)$  & $0.06 \pm 0.00 - 0.15 \pm 0.00 (97.86 \pm 0.00 \% - 154.72 \pm 3.55 \%)$  & $0.04 \pm 0.00 - 0.15 \pm 0.00 (96.95 \pm 96.95 \% - 106.06 \pm 7.60 \%)$  & $0.04 \pm 0.00 - 0.15 \pm 0.00 (89.84 \pm 13.96 \% - 100.62 \pm 0.45 \%)$ \\
	$|\mathcal{P}| = 100$ & $0.08 - 0.17$ & $0.09 \pm 0.01 - 0.18 \pm 0.00 (101.53 \pm 0.62 \% - 121.70 \pm 9.79 \%)$  & $0.08 \pm 0.00 - 0.18 \pm 0.00 (99.56 \pm 0.99 \% - 112.00 \pm 1.91 \%)$  & $0.08 \pm 0.00 - 0.18 \pm 0.00 (98.04 \pm 98.04 \% - 125.99 \pm 5.74 \%)$  & $0.07 \pm 0.01 - 0.17 \pm 0.00 (91.70 \pm 6.95 \% - 103.77 \pm 0.98 \%)$  \\
	$|\mathcal{E}| = 280$ & $0.07 - 0.13$ & $0.08 \pm 0.01 - 0.13 \pm 0.00 (100.00 \pm 0.00 \% - 130.96 \pm 13.86 \%)$ & $0.10 \pm 0.00 - 0.14 \pm 0.00 (100.00 \pm 0.00 \% - 189.42 \pm 3.03 \%)$ & $0.08 \pm 0.00 - 0.13 \pm 0.00 (99.14 \pm 99.14 \% - 130.12 \pm 8.07 \%)$  & $0.08 \pm 0.00 - 0.13 \pm 0.00 (98.15 \pm 0.93 \% - 116.99 \pm 0.83 \%)$  \\ \bottomrule
	\end{tabular}%
	}
\end{table}
\end{landscape}

\clearpage
\bibliographystyle{named}
\bibliography{arXiv_alma_learning_bibliography}
\end{document}